\documentclass[journal,twoside,web]{ieeecolor}
\usepackage[table,dvipsnames]{xcolor}
\usepackage{tikz}
\usetikzlibrary{positioning}
\usetikzlibrary{arrows.meta, positioning}
\usepackage{generic}
\usepackage{cite}
\usepackage{amsmath,amssymb,amsfonts}
\usepackage{algorithmic}
\usepackage{algorithm,algorithmic}
\usepackage{subcaption}
\usepackage{graphicx}

\newtheorem{theorem}{Theorem}
\newtheorem{rem}{Remark}
\newtheorem{definition}{Definition}
\newtheorem{lemma}{Lemma}

\usepackage{hyperref}
\usepackage{textcomp}

\definecolor{mygreen}{RGB}{49, 173, 103}

\newcommand{\B}[1]{\mathbf{#1}}

\def\BibTeX{{\rm B\kern-.05em{\sc i\kern-.025em b}\kern-.08em
    T\kern-.1667em\lower.7ex\hbox{E}\kern-.125emX}}
\begin{document}
\title{Leader-Follower Density Control of Multi-Agent Systems with Interacting Followers: Feasibility and Convergence Analysis}
\author{Beniamino Di Lorenzo, Gian Carlo Maffettone, and Mario di Bernardo
\thanks{Beniamino Di Lorenzo and Gian Carlo Maffettone contributed equally to this work (Corresponding author: Mario di Bernardo).}
\thanks{Beniamino Di Lorenzo, Gian Carlo Maffettone and Mario di Bernardo are with the Modeling and Engineering Risk and Complexity program of the Scuola Superiore Meridionale, Naples, Italy (e-mails: b.dilorenzo@ssmeridionale.it, giancarlo.maffettone@unina.it, mario.dibernardo@unina.it). }
\thanks{Mario di Bernardo is also with the Department of Electrical Engineering and Information Technology of the University of Naples Federico II, Naples, Italy.}}

\maketitle

\begin{abstract}
We address density control problems for large-scale multi-agent systems in leader-follower settings, where a group of controllable leaders must steer a population of followers toward a desired spatial distribution. Unlike prior work, we explicitly account for follower-follower interactions, capturing realistic behaviors such as flocking and collision avoidance. Within a macroscopic framework based on partial differential equations governing the density dynamics, we derive (i) necessary and sufficient feasibility conditions linking the target distribution to interaction strength, diffusion, and leader mass, and (ii) a feedback control law guaranteeing local stability with an explicit estimate of the basin of attraction. Our analysis reveals sharp feasibility thresholds—phase transitions beyond which no control effort can achieve the desired configuration. Numerical simulations in one- and two-dimensional domains validate the theoretical results at the macroscopic level, and agent-based simulations on finite populations confirm the practical deployability of the proposed framework.
\end{abstract}

\begin{IEEEkeywords}
Continuum control, Density control, Follower interactions, Leader-follower systems, Multi-agent systems.
\end{IEEEkeywords}

\thispagestyle{empty}

%%%%%%%%%%%%%%%%%%%%%%%%%%%%%%%%%%%%%%%%%%%%%%%%%%%%%%%%%%%%%%%%%%%%%%%%%%%%%%%%
\section{Introduction}

Controlling the spatial distribution of large agent populations is a fundamental challenge in multi-agent systems theory, with applications ranging from traffic management \cite{siri2021freeway, papageorgiou2003review} and swarm robotics \cite{sinigaglia2025robust, dorigo2021swarm, brambilla2013swarm} to crowd evacuation \cite{yuan2023multi, helbing2000simulating} and biological systems\cite{massana2022rectification, giusti2026data}. In many such scenarios, including the mitigation of traffic jams with autonomous vehicles \cite{stern2018dissipation} and search-and-rescue missions with robotic rescuers \cite{yuan2023multi, pierson2017controlling}, control authority is delegated to a subset of leader agents tasked with steering the collective behavior of a larger follower population.

This challenge may be formulated within the framework of leader–follower density control, where the objective is to design the leaders' behavior so as to drive the followers towards a prescribed spatial distribution \cite{almi2023optimal, maffettone2025leader}. In \cite{maffettone2025leader}, we developed a macroscopic approach based on partial differential equations (PDEs) governing the density dynamics, deriving feasibility conditions, that regards the existence of solutions to the control problem, and stabilizing feedback laws. However, a common simplification in this and other works is to assume that follower dynamics are driven solely by leader-follower interactions and stochastic effects \cite{maffettone2025leader, lama2024shepherding, di2025continuification}, thereby neglecting follower–follower (inter-follower) interactions. This assumption overlooks realistic behaviors such as flocking, collision avoidance, and cohesive or evasive maneuvers among followers \cite{strombom2014solving}.

When followers interact with each other, the control problem changes fundamentally. Consider autonomous vehicles guiding human-driven traffic through congestion: human drivers respond not only to the autonomous leaders but also maintain safe distances from other vehicles, form lanes, and exhibit collective phenomena such as phantom traffic jams \cite{stern2018dissipation}. Similarly, in evacuation scenarios, humans naturally maintain personal space and may exhibit herding behavior that can either facilitate or hinder the process \cite{helbing2000simulating}. These inter-follower interactions can dramatically alter the system's controllability, potentially requiring more leaders to achieve a given task, or rendering certain target configurations entirely infeasible.

In this work, we address this gap by incorporating follower–follower interactions into a macroscopic density control framework. Such interactions, whether cohesive forces between vehicles maintaining safe headways, or herding instincts among evacuees, manifest at the population level as nonlinear modifications to the collective velocity field, making a continuum description the natural setting for their analysis. Building on our previous results \cite{maffettone2025leader}, we adopt a continuum approach in which the spatiotemporal evolution of leader and follower densities is governed by coupled PDEs. This macroscopic formulation is part of a broader multi-scale control framework for large agent populations, where continuum-level design can be complemented by macro-micro bridges, such as optimal transport or spatial sampling, to translate density-level commands into individual agent actions \cite{maffettone2025leader, di2025continuification, napolitano2026optimal, maffettone2024mixed, catello2025sparse}. 

A key finding of this work is that follower–follower interactions fundamentally alter feasibility of density control. Specifically, in some scenarios, inter-follower interactions play the role of a disturbance, requiring a larger leader mass to achieve the same target distribution, in comparison with the case with non-interacting followers. In other cases, when the followers' open-loop steady-state displacement is statistically close to the target density, a smaller fraction of leaders is required to achieve the goal. Such existence results have coherent implications in control design: in the absence of inter-follower interactions, we are able to recover feedback laws ensuring global stability, while in the presence of interacting followers, we only recover local stability guarantees.

\subsection{Related works and main contributions}

The control of large-scale leader-follower systems has been tackled across various methodological domains.

Within mean-field optimal control \cite{fornasier2014mean}, extensions to large interacting populations with heterogeneous control authority have been proposed \cite{fornasier2014mean2, bongini2017optimal, almi2023optimal}, with recent advances modeling transient leadership modes \cite{albi2022mean, albi2024kinetic}. While these approaches explicitly account for inter-follower interactions, they formulate optimal control problems constrained by PDE dynamics without deriving closed-form feedback solutions, limiting their practical applicability.

Field-theoretic approaches for multi-population systems have recently been developed in the context of shepherding control \cite{lama2025nonreciprocal}, focusing on decision-making processes in continuum descriptions rather than feedback laws with convergence guarantees. Notably, inter-follower dynamics are not considered. In \cite{bernardi2021leadership, bernardi2019macroscopic}, effective leadership mechanisms are identified such as orientation toward the target, speed modulation, and visibility but no analytical convergence guarantees are provided.

Our work differs from these approaches by combining three elements: (i) explicit modeling of inter-follower interactions within a leader-follower continuum framework, (ii) analytical feasibility conditions with physical interpretation, (iii) closed-form feedback control with provable stability guarantees.

Our main contributions are as follows:
\begin{enumerate}
    \item[(i)] We incorporate follower–follower interactions into a leader–follower density control framework and derive \emph{necessary and sufficient feasibility conditions} linking target densities to interaction strength, diffusion intensity, and leader mass. We show that inter-follower interactions generally make these conditions more restrictive. 
    \item[(ii)] We generalize one of the feedback control laws from \cite{maffettone2025leader} to the interacting-follower case and prove \emph{local asymptotic stability}, together with an explicit estimate of the basin of attraction.
    \item[(iii)] We provide \emph{design guidelines} for practical deployments, quantifying how increased cohesion or repulsion among followers affects the minimum required leader mass and modifies the feasibility region.
\end{enumerate}

While this work focuses on theoretical foundations validated through numerical simulations, it lays the groundwork for future experimental validation on robotic platforms and integration with safety constraints for operation in complex environments. For example, the minimum leader mass derived from our feasibility analysis translates, in finite populations, to a minimum number of leaders, a constraint analogous to herdability conditions studied in shepherding problems \cite{lama2024shepherding}, where one asks what is the minimum number of herders (leaders) required to confine or drive a given number of  targets (followers).

The rest of the paper is organized as follows. In Section~\ref{eq:mod_prob_stat}, we present our mathematical model in a one-dimensional setting and provide the control problem statement. In Section~\ref{sec:feasibility}, we provide a feasibility analysis, returning an existence result for the solution of the control problem. In Section~\ref{sec:control_design}, we show our feedback control strategy ensuring local stability, which we then validate in Section~\ref{sec:numerical_validation}. The mathematical framework is extended to multi-dimensional domains in Section~\ref{sec:ext_hd}, and guidelines for the deployment of our strategy on swarms of finite size are given in Section~\ref{sec:deployment}. Conclusions are presented in Section~\ref{sec:conclusions}.

\section{Model and Problem Statement}\label{eq:mod_prob_stat}

To compactly describe the collective spatial organization of large-scale multi-agent systems, we extend the leader-follower density control framework from \cite{maffettone2025leader} by incorporating inter-follower interactions. Our approach employs two coupled PDEs that capture the spatiotemporal evolution of the population densities. The followers' density dynamics is governed by a convection-diffusion equation, where the diffusive term represents agent-level stochasticity and the convection accounts for both inter-population (follower-leader) and intra-population (follower-follower) interactions. The leaders' density evolution follows a purely convective law driven by a control velocity field $u$. We formulate our problem on the unit circle $\mathcal{S}=[-\pi, \pi]$, yielding
\begin{subequations}\label{eq:themodel}
    \begin{align}
        \rho_t^L(x,t) + \left[\rho^L(x, t) u(x,t)\right]_x &= 0,\label{eq:leaders}\\
        \rho_t^F(x,t) + \left[\rho^F(x,t)\left(v^{FL}(x,t) + v^{FF}(x,t)\right)\right]_x &= D \rho^F_{xx}(x,t),\label{eq:followers}
    \end{align}
\end{subequations}
where $\rho^L,\rho^F : \mathcal{S}\times\mathbb{R}_{\geq 0}\to\mathbb{R}_{\geq 0}$ are the leaders' and followers' densities, respectively, with $x$ and $t$ denoting the spatial and temporal coordinates. The subscripts $t$ and $x$ denote partial derivatives with respect to time and space. Here, $D\in\mathbb{R}_{>0}$ is the diffusion coefficient governing the followers' stochastic behavior, and $u : \mathcal{S}\times\mathbb{R}_{\geq 0}\to\mathbb{R}$ is the control velocity field to be designed. The velocity fields
\begin{subequations}
\begin{align}
    v^{FF}(x,t) &= \int_\mathcal{S} f^{FF}(y\triangleright x) \rho^{F}(y, t)\,\mathrm{d}y = (f^{FF}*\rho^F)(x, t),\label{eq:vff}\\
    v^{FL}(x,t) &= \int_\mathcal{S} f^{FL}(y\triangleright x) \rho^{L}(y, t)\,\mathrm{d}y = (f^{FL}*\rho^L)(x, t),\label{eq:vfl}
\end{align}
\end{subequations}
capture the follower-follower and follower-leader interactions, respectively, where ``$*$" denotes the convolution operator\footnote{Since the problem is formulated on a periodic domain, convolutions should be interpreted as circular convolutions, which naturally preserve periodicity.}, ${y \triangleright x} = (x-y+\pi) \,\mathrm{mod}(2\pi)-\pi$ is the relative position between $x$ and $y$ wrapped on $\mathcal{S}$ and $f^{FF}, f^{FL}:\mathcal{S}\to\mathbb{R}$ are periodic, odd, soft-core interaction kernels \cite{bernoff2011primer, bodnar2005derivation}. We fix follower-leader interactions to be driven by a periodic repulsive interaction kernel, that is
\begin{align}\label{eq:rep_kern}
    f^{FL}(x) = \frac{\mathrm{sgn}(x)}{\mathrm{e}^{2\pi/\ell}-1} \left[\mathrm{e}^{\frac{2\pi-\vert x\vert }{\ell}} - \mathrm{e}^{\frac{\vert x\vert}{\ell}}\right],
\end{align}
where $\ell$ is the characteristic interaction length. 

\begin{rem}
    Our theoretical framework does not assume a specific form for $f^{FF}$. However, in our examples and simulations, we assume follower-follower interactions to be either repulsive or driven by a Morse potential (short-range repulsion and long-range attraction), that is
\begin{align}\label{eq:ff_kernel}
    f^{FF}(x) = \frac{1}{\ell_r} f_r(x) - \frac{\zeta}{\ell_a}f_a(x),
\end{align}
where $\ell_r$ and $\ell_a$ are the characteristic length scale of repulsion and attraction; $\zeta$ is a gain modulating the relative strength of attraction and repulsion, and
\begin{align}
    f_i(x) = \frac{\mathrm{sgn}(x)}{\mathrm{e}^{2\pi/\ell_i}-1} \left[\mathrm{e}^{\frac{2\pi-\vert x\vert }{\ell_i}} - \mathrm{e}^{\frac{\vert x\vert}{\ell_i}}\right],
\end{align}
for $i = a, r$. The case $\zeta =0$ implies repulsive follower-follower interactions\footnote{Note that periodic interaction kernels can be derived by periodization of non-periodic ones -- see \cite{boldini2024stigmergy} for more details.}. We refer to Fig.~\ref{fig:kernels} for a graphical schematization of $f^{FL}$ and $f^{FF}$ (restricting ourself to the case of Morse inter-follower interactions for the sake of a good graphical interpretation).  
\end{rem}
\begin{figure}
    \centering
    \includegraphics[width=0.6\linewidth]{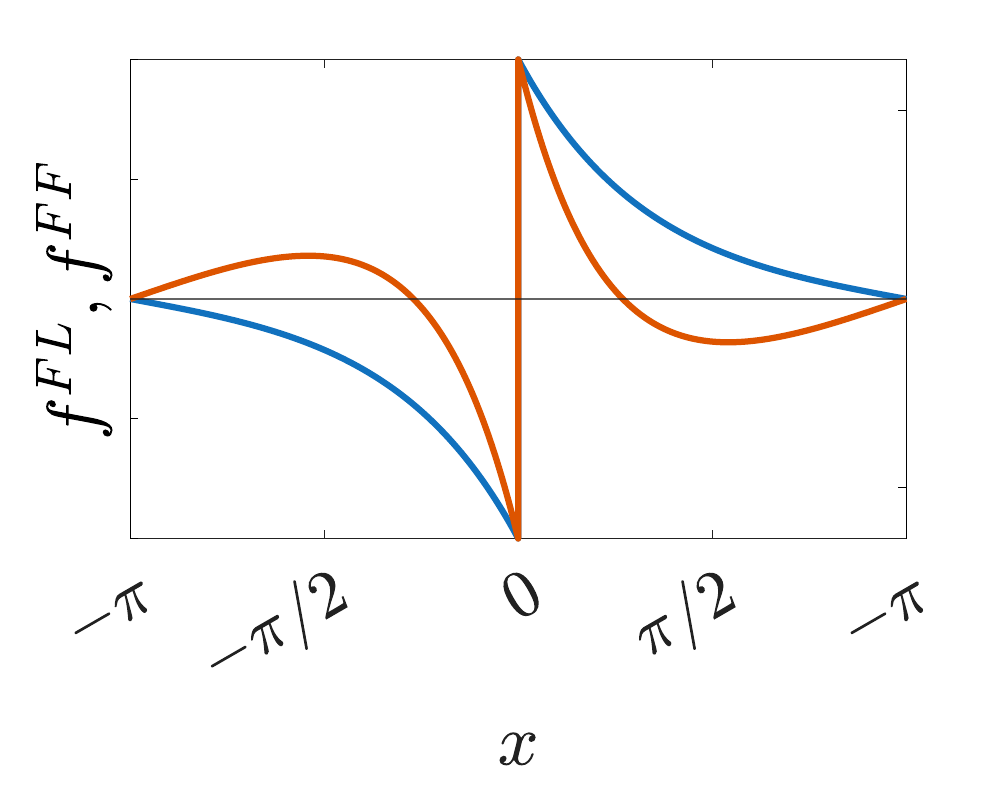}
    \caption{Interaction kernels: in blue the repulsive follower-leader interaction kernel (pure repulsion); in orange, the Morse follower-follower interaction kernel (long-range attraction and short-range repulsion).}
    \label{fig:kernels}
\end{figure}

The model \eqref{eq:themodel} is complemented with initial conditions in $\mathcal{H}^2(\mathcal{S})$\footnote{$\mathcal{H}^2(\mathcal{S})$ is the Sobolev space of functions defined on $\mathcal{S}$ with up to second-order derivative being in  $\mathcal{L}^2(\mathcal{S})$.} 
\begin{align}\label{eq:IC}
    \rho^i(x, 0) &= \rho^i_0(x),\;\forall x \in\mathcal{S},\;\;i = L, F,
\end{align}
and periodic boundary conditions
\begin{align}\label{eq:PBCs}
    \rho^i(-\pi, t) &= \rho^i(\pi, t),\;\forall t\in\mathbb{R}_{\geq 0}, \;\;i=L, F.
\end{align}
We remark that when $u$ is periodic, \eqref{eq:PBCs} ensures conservation of the total mass of the system (noting that $v^{FL}$ and $v^{FF}$ are periodic since they arise from circular convolutions)--see \cite{maffettone2025leader} for further details. We define the mass of leaders and followers as
\begin{align}
    M^i &= \int_\mathcal{S}\rho^i(x, t)\,\mathrm{d}x, \;\;i=L, F,
\end{align}
and, without any loss of generality, we assume 
\begin{align}
    M^L + M^F = 1.
\end{align}
For the sake of interest, we neglect the scenario $M^L=0$ and $M^F=1$ and its converse.

\subsection{Problem Statement}\label{sec:prob_stat}
We seek a periodic control field $u$ in \eqref{eq:leaders} that drives the follower density $\rho^F$ from an initial distribution $\rho^F_0$ to asymptotically converge to a prescribed target distribution
\begin{align}\label{eq:prob_stat}
    \lim_{t\to\infty} \left\Vert \bar{\rho}^F(\cdot) - \rho^F(\cdot, t) \right\Vert_2 = 0,
\end{align}
where $\bar{\rho}^F:\mathcal{S}\to\mathbb{R}_{>0}$ is a time-invariant target density satisfying the mass conservation constraint $\int_\mathcal{S}\bar{\rho}^F(x)\,\mathrm{d}x = M^F$, with $M^F$ denoting the total follower mass, and $\Vert \cdot \Vert_2$ the $\mathcal{L}^2$-norm on $\mathcal{S}$ (we highlight the variable with respect to which the norm is taken with a  ``$\cdot$" in the functions arguments).

\begin{rem}
The macroscopic formulation \eqref{eq:themodel} describes the density dynamics; in Section~\ref{sec:deployment} we focus on the agent-based model underlying such a macroscopic density dynamics.
\end{rem}

\section{Feasibility Analysis}\label{sec:feasibility}
Before designing a control law, we must determine whether a given target distribution is achievable. In this section, we establish {\em necessary and sufficient} feasibility conditions for the density control problem with inter-follower interactions. These conditions explicitly link the target distribution to the interaction strength, diffusion intensity, and available leader mass—providing interpretable criteria for when density control is possible.

\begin{definition}[Feasible Steady-State Solution]\label{def:feasibility}
    The problem described by \eqref{eq:themodel} and \eqref{eq:prob_stat} admits a \emph{feasible steady-state solution} (or analogously, is \emph{feasible}) if, for a given follower mass $M^F \in (0,1)$, there exists a leader density distribution $\bar{\rho}^L : \mathcal{S} \to \mathbb{R}_{\geq 0}$ satisfying the constraints
    \begin{subequations}\label{eq:feasibility_conditions}
        \begin{align}
            \bar{\rho}^L(x) &\geq 0, \quad \forall x \in \mathcal{S}, \label{eq:positivity}\\
            \int_\mathcal{S} \bar{\rho}^L(x)\,\mathrm{d}x &= M^L = 1 - M^F, \label{eq:mass}
        \end{align}
    \end{subequations}
    and such that the target density $\bar{\rho}^F$ constitutes a steady-state solution of the follower dynamics \eqref{eq:followers} under the time-invariant interaction field
    \begin{align}\label{eq:steady_interaction}
        v^{FL}(x, t) = \bar{v}^{FL}(x) = (f^{FL} * \bar{\rho}^L)(x).
    \end{align}
\end{definition}

For the problem to be feasible, we need $\bar{\rho}^F$ to be a solution of \eqref{eq:followers}, that is
\begin{align}\label{eq:followers_ss}
    \left[\bar{\rho}^F(x)(\bar{v}^{FL}(x) + \bar{v}^{FF}(x))\right]_x &= D\bar{\rho}^F_{xx}(x),
\end{align}
with $\bar{v}^{FF} = (f^{FF}*\bar{\rho}^F)$. By a spatial integration of \eqref{eq:followers_ss}, recalling $\bar{\rho}^F\neq 0$ $\forall x\in\mathcal{S}$ (see Section~\ref{sec:prob_stat}), we obtain
\begin{align}
    \bar{v}^{FL}(x) = D\frac{\bar{\rho}^F_x(x)}{\bar{\rho}^F(x)} - \bar{v}^{FF}(x) + \frac{A}{\bar{\rho}^F(x)}, 
\end{align}
with $A$ being the constant of integration. Such a relation holds almost everywhere, being the problem framed in $\mathcal{H}^2(\mathcal{S})$. Taking initial conditions in $\mathcal{C}^2(\mathcal{S})$ would make the relation hold point-wise.

Being $f^{FL}$ odd and recalling the Fubini's theorem for convolutions \cite{royden1988real}, it must hold that 
\begin{align}\label{eq:integral_cond}
    \int_\mathcal{S} \bar{v}^{FL}(x)\,\mathrm{d}x = \int_\mathcal{S} (f^{FL}*\bar{\rho}^L)(x)\,\mathrm{d}x = 0.
\end{align}
Note that the same holds for $\bar{v}^{FF}$, that is $\int_\mathcal{S}\bar{v}^{FF}\,\mathrm{d}x =0$. Hence, we can compute $A$ imposing the integral condition \eqref{eq:integral_cond}, yielding
\begin{align}
    A = -\frac{D\left[\log(\bar{\rho}^F(x))\right]_{-\pi}^\pi}{\int_{\mathcal{S}}1/\bar{\rho}^F(x)\,\mathrm{d}x} = 0,
\end{align}
for the periodicity of $\bar{\rho}^F$. Consequently, the steady-state interaction field must satisfy
\begin{align}\label{eq:vfl_bar}
    \bar{v}^{FL}(x) = D\frac{\bar{\rho}^F_x(x)}{\bar{\rho}^F(x)} - \bar{v}^{FF}(x),
\end{align}
where $\bar{v}^{FF}$ represents the follower-follower interaction field. This velocity field extends the result in \cite{maffettone2025leader} by incorporating the additional term $\bar{v}^{FF}$, which accounts for inter-follower interactions absent in the original formulation.

From \eqref{eq:vfl_bar} and the requirement that $\bar{v}^{FL} = f^{FL}*\bar{\rho}^L$, we can recover the leader density $\bar{\rho}^L$ via deconvolution with the repulsive kernel \eqref{eq:rep_kern} (see \cite{maffettone2025leader, boldini2024stigmergy}). This operation yields
\begin{align}\label{eq:rho_bar_l}
    \bar{\rho}^L(x) = \frac{1}{2}\bar{v}^{FL}_x(x) - \frac{1}{2\ell^2}\int\bar{v}^{FL}(x)\,\mathrm{d}x + B,
\end{align}
where $B$ is an integration constant (see Appendix B of \cite{maffettone2025leader} for the derivation). 

Since deconvolution alone does not guarantee that $\bar{\rho}^L$ satisfies Definition~\ref{def:feasibility}, the feasibility problem reduces to determining whether there exists a value of $B$ in \eqref{eq:rho_bar_l} such that the conditions \eqref{eq:feasibility_conditions} are satisfied.

To simplify the analysis, we express the target density as
\begin{align}
    \bar{\rho}^F(x) = M^F \hat{\rho}^F(x), \quad \text{with} \quad \int_\mathcal{S} \hat{\rho}^F(x)\,\mathrm{d}x = 1,
\end{align}
where $\hat{\rho}^F$ denotes the normalized desired follower density profile. Consequently, equation \eqref{eq:vfl_bar} becomes
\begin{align}\label{eq:vfl_bar2}
    \bar{v}^{FL}(x) = D\frac{\hat{\rho}^F_x(x)}{\hat{\rho}^F(x)} - M^F \hat{v}^{FF}(x),
\end{align}
with $\hat{v}^{FF}(x) = (f^{FF}*\hat{\rho}^F)(x)$.

Before stating our feasibility theorem, we define some useful functions and constants. Specifically,  
\begin{subequations}\label{eq:functions}
    \begin{align}
        g_1(x) &= \left(\frac{\hat{\rho}_x^F(x)}{\hat{\rho}^F(x)}\right)_x = \left[\log(\hat{\rho}^F(x))\right]_{xx}, \label{eq:g1}\\
        g_2(x) &= \log(\hat{\rho}^F(x)), \label{eq:g2} \\
        g_F(x) &= \frac{1}{2\ell^2}\int \hat{v}^{FF}(x)\,\mathrm{d}x -\frac{\hat{v}^{FF}_x(x)}{2}, \label{eq:gF}
    \end{align}
and:
    \begin{align}
        C &= \int_\mathcal{S}\log(\hat{\rho}^F(x))\,\mathrm{d}x, \label{eq:C}\\
        C_F &= \int_\mathcal{S} g_F(x)\,\mathrm{d}x. \label{eq:Cf}
    \end{align}
\end{subequations}
Moreover
\begin{align}\label{eq:h_F}
    h_F(x) = \frac{C_F}{2\pi} - g_F(x).
\end{align}
An interpretation of such quantities follows the next theorem that
provides necessary and sufficient conditions for feasibility. Notably, these conditions reduce to verifying inequalities on the leader mass $M^L$, with the bounds determined by the target distribution, diffusion coefficient and, crucially, the inter-follower interaction kernel through $h_F$.

\begin{theorem}[Feasibility]\label{th:feasibility_theorem}
    Given a normalized target density $\hat{\rho}^F$, the problem described by \eqref{eq:themodel} and \eqref{eq:prob_stat} is feasible according to Definition~\ref{def:feasibility} if and only if the following constraints are fulfilled:
    \begin{subequations}\label{eq:constraints_feasibility}
        \begin{align}
            \widehat{M}^L_1 \leq M^L \leq \widehat{M}^L_2, \quad M^L \in (0,1), \label{eq:constr_mass}\\
            G(\bar{x}) \leq 0, \quad \forall \bar{x}\in\mathcal{S}: H(\bar{x}) = 0, \label{eq:constr_G}
        \end{align}
    \end{subequations}
    where
    \begin{subequations}
        \begin{align}
            \widehat{M}^L_1 &= \max_{x:H(x)>0} \left[h(x)\right], \label{eq:M1hat}\\
            \widehat{M}^L_2 &= \min_{x:H(x)<0} \left[h(x)\right], \label{eq:M2hat}
        \end{align}
    \end{subequations}
    with
    \begin{align}
        h(x) &= \frac{G(x)}{H(x)},\label{eq:h}
    \end{align}
    and
    \begin{subequations}
        \begin{align}
            H(x) &= \frac{1}{2\pi} + h_F(x),\label{eq:HF}\\
            G(x) &= -\frac{D}{2}g_1(x) + \frac{D}{2\ell^2}g_2(x) - \frac{DC}{4\pi\ell^2} + h_F(x),\label{eq:G}
        \end{align}
    \end{subequations}
    where $g_1$, $g_2$, $C$ are defined in \eqref{eq:functions} and $h_F$ in \eqref{eq:h_F}.
\end{theorem}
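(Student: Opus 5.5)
Starting from the discussion preceding the theorem, feasibility is equivalent to the existence of a constant $B$ in \eqref{eq:rho_bar_l} such that the resulting $\bar{\rho}^L$ satisfies \eqref{eq:positivity} and \eqref{eq:mass}, with $\bar{v}^{FL}$ given by \eqref{eq:vfl_bar2}. I will make this dependence on $B$ fully explicit and then show that it is affine in the mass constraint. The first step is to substitute \eqref{eq:vfl_bar2} into \eqref{eq:rho_bar_l}. The diffusive part contributes $\frac{D}{2}g_1(x)$ through $\frac{1}{2}\bar{v}^{FL}_x$. It contributes $-\frac{D}{2\ell^2}g_2(x)$ through the primitive $-\frac{1}{2\ell^2}\int D(\log\hat{\rho}^F)_x\,\mathrm{d}x$, where any additive constant of the primitive is absorbed into $B$. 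The interaction part contributes exactly $M^F g_F(x)$ by the definition \eqref{eq:gF}. This gives
\begin{align*}
\bar{\rho}^L(x) = \frac{D}{2}g_1(x) - \frac{D}{2\ell^2}g_2(x) + M^F g_F(x) + B.
\end{align*}

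The second step is to fix $B$ through the mass constraint \eqref{eq:mass}. Integrating over $\mathcal{S}$, the term $g_1 = (\log\hat{\rho}^F)_{xx}$ integrates to zero by periodicity, $g_2$ integrates to $C$, and $g_F$ integrates to $C_F$. This yields
\begin{align*}
B = \frac{1}{2\pi}\left(M^L + \frac{DC}{2\ell^2} - M^F C_F\right).
\end{align*}
So $B$ is uniquely determined, and feasibility reduces to checking positivity \eqref{eq:positivity} for this single choice. I will substitute $B$ back and use $M^F = 1 - M^L$ to write $\bar{\rho}^L$ as an affine function of $M^L$. The coefficient of $M^L$ is $\frac{1}{2\pi} - g_F(x) + \frac{C_F}{2\pi} = \frac{1}{2\pi} + h_F(x) = H(x)$. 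The $M^L$-independent part is $\frac{D}{2}g_1 - \frac{D}{2\ell^2}g_2 + \frac{DC}{4\pi\ell^2} - h_F = -G(x)$. Hence
\begin{align*}
\bar{\rho}^L(x) = M^L H(x) - G(x).
\end{align*}
The main bookkeeping risk is in this step: tracking signs and the $1/(2\pi)$ factors so that the $M^F$ terms regroup exactly into $h_F$ and match \eqref{eq:HF} and \eqref{eq:G}. I will verify this carefully term by term.

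The final step is to observe that positivity means $M^L H(x) \geq G(x)$ for all $x\in\mathcal{S}$, and to split pointwise according to the sign of $H$.
\begin{itemize}
\item Where $H(x) > 0$, the condition is $M^L \geq h(x)$. Taking the supremum over this set gives $M^L \geq \widehat{M}^L_1$.
\item Where $H(x) < 0$, dividing flips the inequality, so the condition is $M^L \leq h(x)$. This gives $M^L \leq \widehat{M}^L_2$.
\item Where $H(x) = 0$, the condition is independent of $M^L$ and reads $G(x) \leq 0$, which is \eqref{eq:constr_G}.
\end{itemize}
Each step is an equivalence, so both directions of the theorem hold simultaneously. The requirement $M^L \in (0,1)$ is simply the standing assumption.

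A minor technical point is that the max and min are taken over sets that may be empty or not closed. I will adopt the convention that an empty maximum is $-\infty$ and an empty minimum is $+\infty$, and read max and min as sup and inf. Continuity of $G$ and $H$ for sufficiently regular $\hat{\rho}^F$ then makes the bounds well defined.
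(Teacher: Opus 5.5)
Your proposal is correct and follows essentially the same route as the paper: substitute \eqref{eq:vfl_bar2} into the deconvolution formula, fix $B$ through the mass constraint, and rewrite nonnegativity as $M^L H(x) \geq G(x)$. Your identity $\bar{\rho}^L = M^L H - G$ checks out term by term, and your explicit split by the sign of $H$ (including the $H=0$ case and reading the extrema as sup/inf) spells out the final step that the paper states in one line.
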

\begin{proof}
    We start by proving sufficiency ($\Rightarrow$). Given the expression of $\bar v^{FL}$  in \eqref{eq:vfl_bar2}, $\bar \rho^L$ in \eqref{eq:rho_bar_l}
    can be rewritten as
    \begin{align}\label{eq:rho_bar_l2}
        \bar \rho^L(x) = \frac{D}{2}g_1(x) - \frac{D}{2\ell^2}g_2(x)+M^Fg_F(x)+B
    \end{align}
    with $g_1$, $g_2$ and $g_F$ defined in \eqref{eq:functions}. Using \eqref{eq:rho_bar_l2}, we can compute
    \begin{align}
        \int_\mathcal{S} \bar{\rho}^L(x)\,\mathrm{d}x = -\frac{DC}{2\ell^2} + M^F C_F + 2\pi B,
    \end{align}
    where $C$ and $C_F$ are given in \eqref{eq:C} and \eqref{eq:Cf}, and we exploited the periodicity of $\hat{\rho}^F$. To fulfill \eqref{eq:mass} in Definition \ref{def:feasibility}, we choose the arbitrary constant $B$ in \eqref{eq:rho_bar_l2} as
    \begin{align}\label{eq:B}
        B = \frac{1}{2\pi}\left(1-M^F+\frac{DC}{2\ell^2}-M^F C_F \right).
    \end{align}
    Substituting \eqref{eq:B} into \eqref{eq:rho_bar_l2} and imposing non-negativity of $\bar \rho^L$ due to \eqref{eq:positivity} in Definition \ref{def:feasibility}, we obtain
    \begin{multline}\label{eq:rho_bar_L_2}
        \bar{\rho}^L(x) = \frac{D}{2}g_1(x)-\frac{D}{2\ell^2}g_2(x)+M^Fg_F(x)  \\+ \frac{1}{2\pi}\left(1-M^F+\frac{DC}{2\ell^2}-M^F C_F \right)\geq 0. 
    \end{multline}

    Recalling that $M^F =1-M^L$, yields
    \begin{multline}
        M^L \left(\frac{1}{2\pi}+h_F(x)\right) \geq -\frac{D}{2}g_1(x) + \frac{D}{2\ell^2}g_2(x) \\- \frac{DC}{4\pi\ell^2} + h_F(x),
    \end{multline}
    which can be compactly rewritten as
    \begin{align}\label{eq:compact_ineq}
        M^L H(x) \geq G(x),
    \end{align}
    using \eqref{eq:HF} and \eqref{eq:G}. Inequality \eqref{eq:compact_ineq} is satisfied under a choice of $M^L$ fulfilling the constraints \eqref{eq:constraints_feasibility}, thus proving sufficiency.

   Next we prove necessity ($\Leftarrow$). Assume the problem is feasible according to Definition~\ref{def:feasibility}, i.e., there exists $\bar{\rho}^L \geq 0$ with $\int_\mathcal{S} \bar{\rho}^L \, \mathrm{d}x = M^L\in(0, 1)$ such that $\bar{\rho}^F$ is a steady-state of \eqref{eq:followers}. Then the derivation from \eqref{eq:rho_bar_l2} to \eqref{eq:compact_ineq} holds, yielding $M^L H(x) \geq G(x)$ for all $x \in \mathcal{S}$. Since $M^L$ is a constant independent of $x$, the constraints \eqref{eq:constraints_feasibility} are necessary for this inequality to hold pointwise.
\end{proof}

\begin{rem}
    Notice that $C_F$ can be expressed as
    \begin{align}
        C_F = \frac{1}{2\ell^2}\int_\mathcal{S}\left(\int \hat{v}^{FF}(x)\,\mathrm{d}x \right)\mathrm{d}x,
    \end{align}
    where we exploited the Fubini's theorem for convolutions and periodicity of circular convolutions in \eqref{eq:Cf}.
\end{rem}
\begin{rem}
    It is easy to show that $\int_\Omega G\,\mathrm{d}x = 0$. For continuity arguments, then $G$ has to switch sign over $\mathcal{S}$. Thus, feasibility is lost when $H<0$ $\forall x\in\mathcal{S}$, as it implies $\widehat{M}^L_2 < 0$ (recall that it is physically a mass).  
\end{rem}

\begin{rem}
    In the absence of inter-follower interactions $h_F = 0$ so that $H_F>0$ $\forall\,x\in\mathcal{S}$ by construction. In such a scenario, the set of constraints \eqref{eq:constraints_feasibility} reduces to the left constraint only, resembling the feasibility result presented in  \cite{maffettone2025leader}. 
\end{rem}

\begin{rem}[Physical Interpretation]\label{rem:interpretation}
The functions and constants in \eqref{eq:h} are all related to properties of the desired target distribution and intra-followers interactions:
\begin{itemize}
    \item $D g_1(x)$ represents the slope of the steady-state desired velocity field in the absence of inter-followers interactions -- see \eqref{eq:vfl_bar};
    \item $D g_2(x)$ represents the potential field associated to the desired steady-state velocity in the absence of inter-follower interactions -- see \eqref{eq:vfl_bar};
    \item the constant $C$ is proportional to the Kullback-Leibler divergence \cite{kullback1951information} between the desired and uniform distribution, thus measuring the statistical distance between the control objective and a constant density profile, that, by construction is an equilibrium of the open-loop followers' dynamics (no interactions with leaders);
    \item the function $h_F$ quantifies the influence of follower-follower interactions on the feasibility constraint.
\end{itemize}
The critical mass thresholds $\widehat{M}^L_{1, 2}$ thus emerge from the interplay between the target distribution, the diffusivity of the followers, and inter-follower interactions.
\end{rem}

\begin{rem}[Connection to herdability and shepherding] \label{rem:herdability}
Our result establishes how to tune the exact leader mass so as to achieve a given target distribution. In finite populations, this translates to a minimum number of leaders $\widehat{N}^L_{1}$ needed to control $N^F$ followers. Such a constraint is reminiscent of \emph{herdability} conditions in shepherding problems \cite{lama2024shepherding}. There, herdability quantifies the minimum number of herders (leaders) required to confine or drive a given number of targets (followers). As discussed in detail in Section~\ref{sec:deployment}, our analysis extends this concept to density shaping tasks: not only must leaders be sufficiently numerous, but the interplay between target geometry (encoded in $g_1$), diffusivity $D$, and follower-follower interactions determines whether precise spatial distributions are achievable. We provide quantitative guidance for system design in applications such as crowd management and swarm robotics.
\end{rem}

\subsection{Example} \label{sec:feasibility_example}
In this section, we provide an example that illustrates how to use Theorem~\ref{th:feasibility_theorem} in practical scenarios where we fix the desired normalized followers density $\hat{\rho}^F$. 

\begin{figure}
    \centering
    \begin{subfigure}{0.48\columnwidth}
        \centering
        \includegraphics[width=\linewidth]{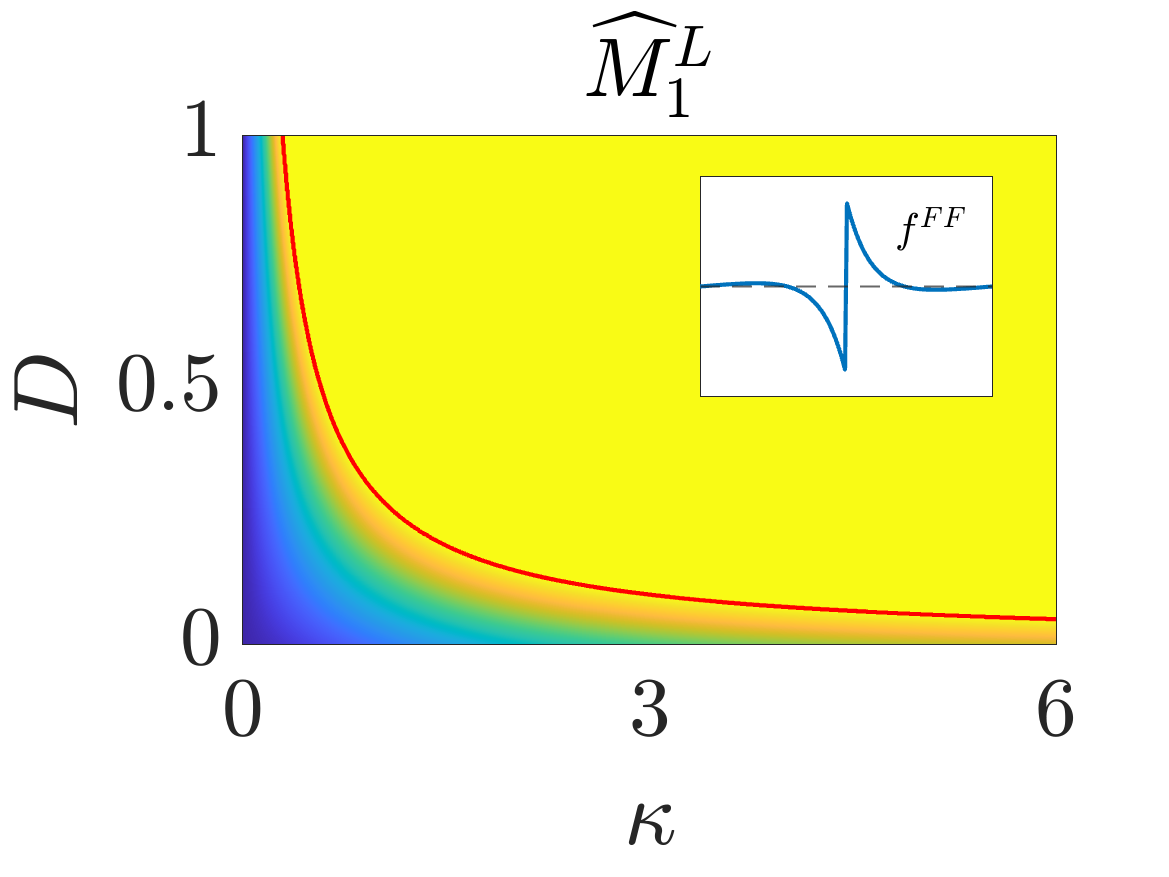}
        \caption{}
        \label{fig:M_L1_weak}
    \end{subfigure}
    \hfill
    \begin{subfigure}{0.48\columnwidth}
        \centering
        \includegraphics[width=\linewidth]{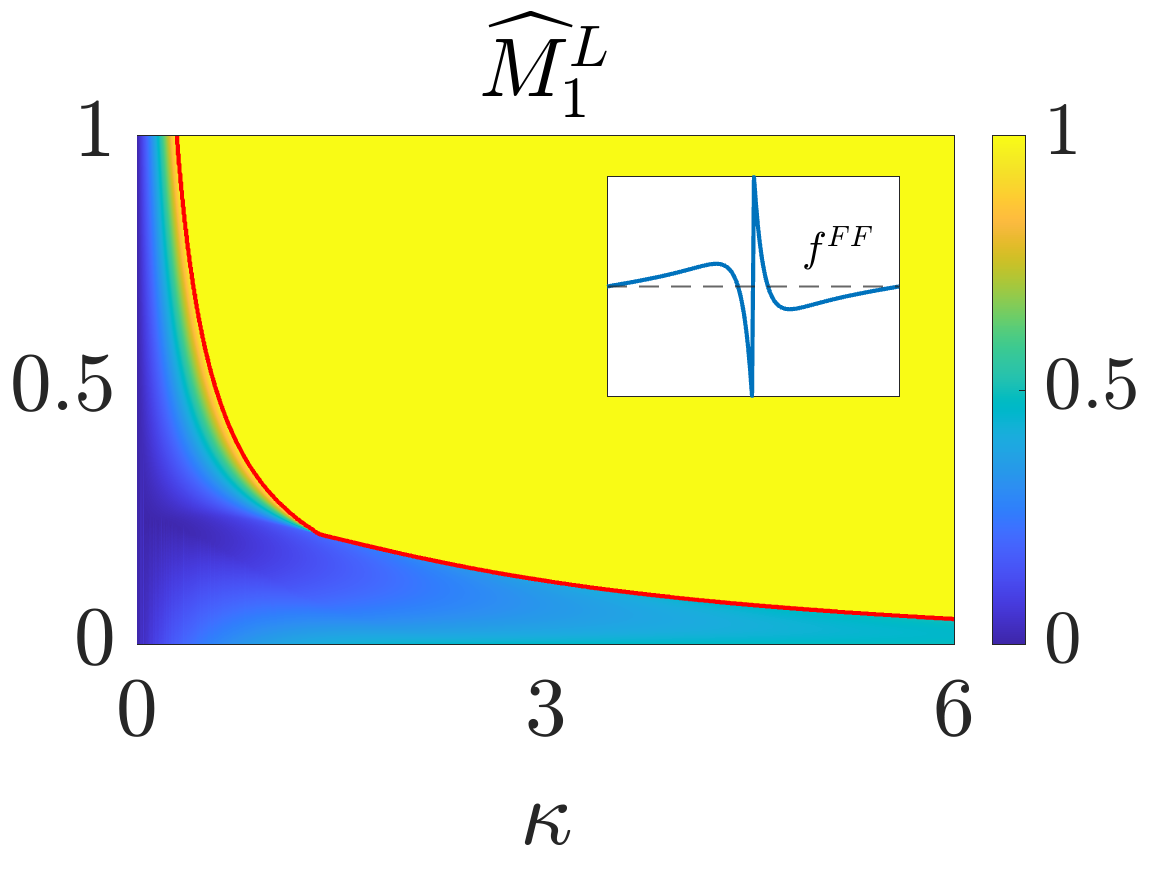}
        \caption{}
        \label{fig:M_L1_strong}
    \end{subfigure}

    \caption{Feasibility plots: minimum amount of leaders' mass $\widehat{M}^L_1$ when varying $\kappa$ and $D$ for (a) weak follower-follower Morse-type interactions and (b) strong Morse-type follower-follower interactions. The red curves in (a) and (b) indicate when feasibility is lost, while the inlet plots represent the shapes of the kernels in the two scenarios.}
    \label{fig:M_L_hat}
\end{figure}

\begin{figure}
    \centering
    \begin{subfigure}{0.48\columnwidth}
        \centering
        \includegraphics[width=\linewidth]{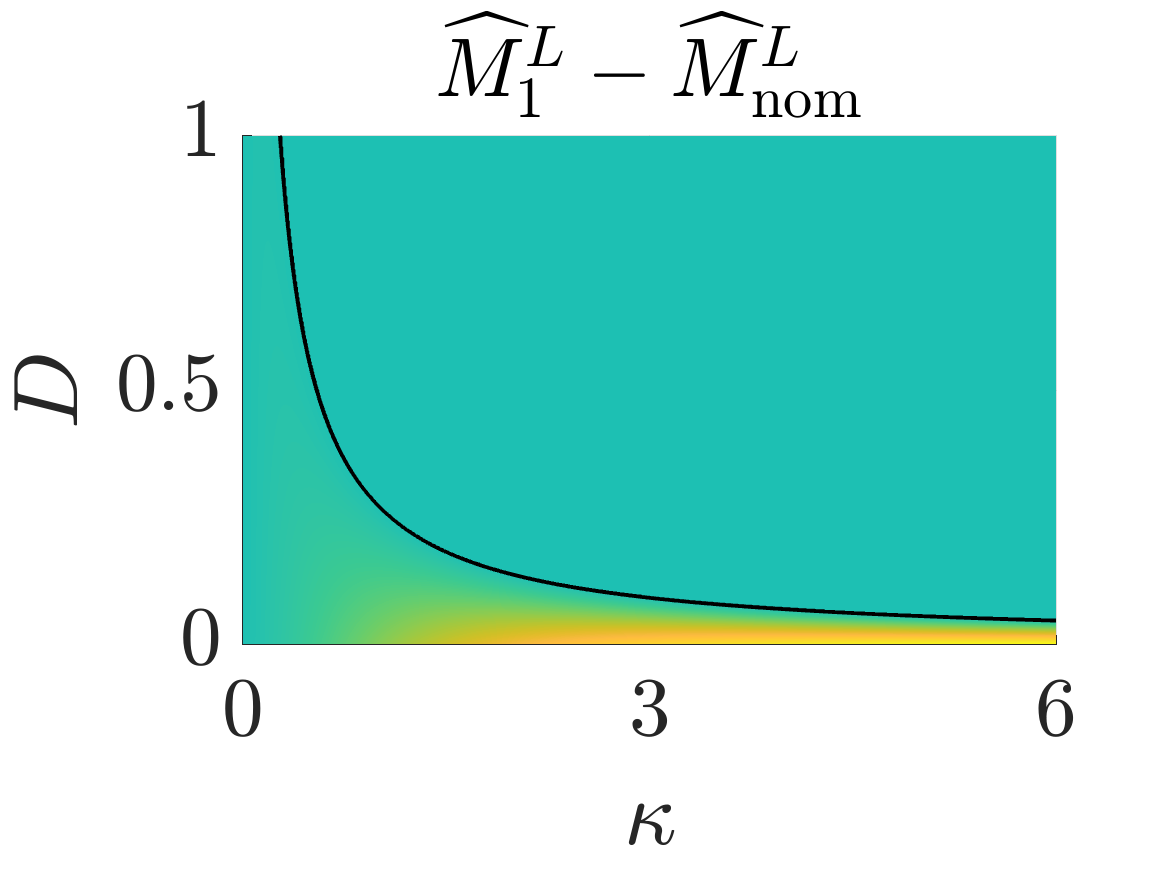}
        \caption{}    \label{fig:mass_difference_weak}
    \end{subfigure}
    \hfill
    \begin{subfigure}{0.48\columnwidth}
        \centering
        \includegraphics[width=\linewidth]{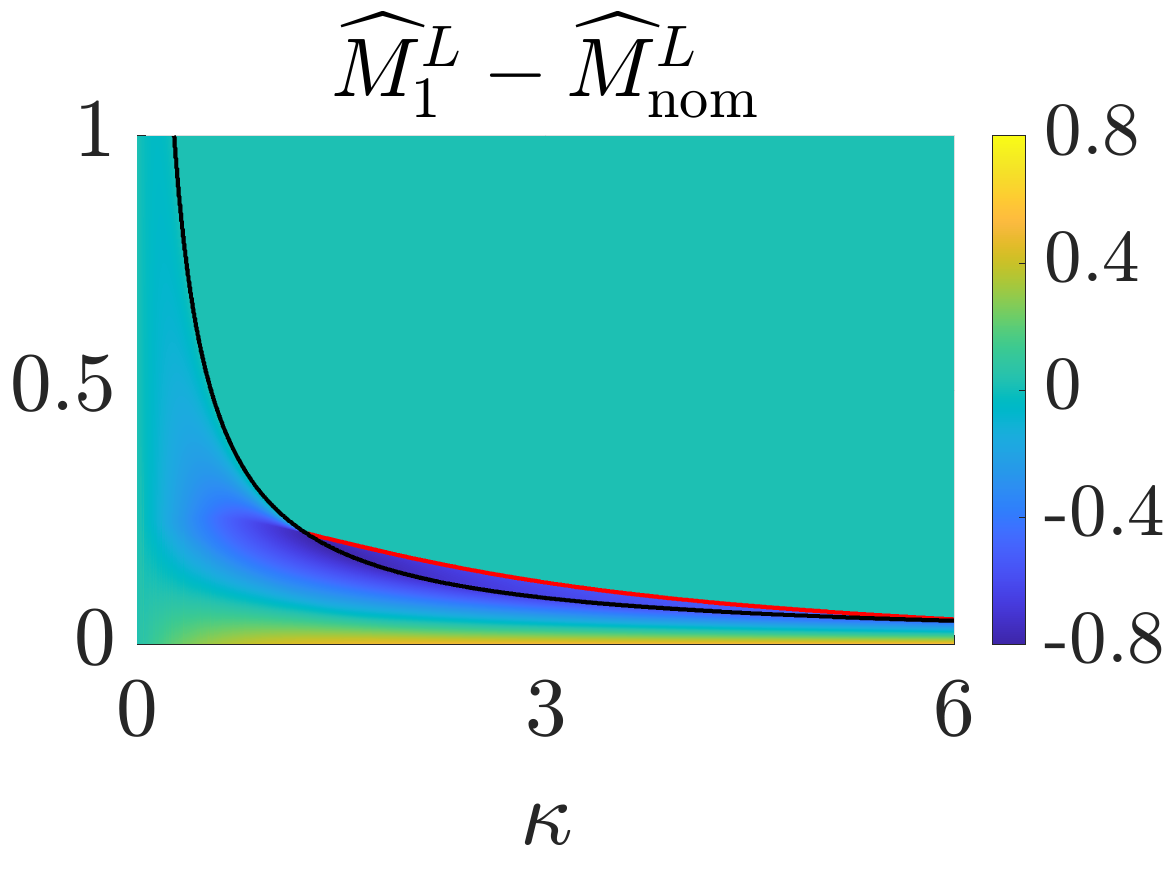}
        \caption{}
        \label{fig:mass_difference_strong}
    \end{subfigure}

    \caption{Mismatches between $\widehat{M}_1^L$ and the nominal minimum 
    leader mass $\widehat{M}^L_\mathrm{nom} := \widehat{M}_1^L\big|_{f^{FF}=0}$, 
    i.e., the minimum leader mass required to ensure feasibility in the 
    absence of follower-follower interactions. Positive values indicate that 
    inter-follower interactions require a larger leader population to achieve 
    the same target density; negative values indicate that interactions assist 
    the control task. The red and black lines denote respectively the 
    feasibility region borders in the presence or absence of follower-follower 
    interactions.}
    \label{fig:mass_difference}
\end{figure}

In particular, we consider as the normalized desired follower density the von Mises distribution
\begin{align}\label{eq:vonMises}
    \hat{\rho}^F(x) = \frac{\mathrm{e}^{\kappa \cos(x-\mu)}}{2\pi I_0(\kappa)} ,
\end{align}
where $\kappa > 0$ is the concentration coefficient, $\mu \in \mathcal{S}$ is the mean, and $I_0(\cdot)$ denotes the modified Bessel function of the first kind of order zero. For our analysis, we set $\mu = 0$, while leader-follower interactions are governed by the kernel \eqref{eq:rep_kern} with interaction length $\ell=\pi$. The inter-follower interactions are governed by the Morse kernel \eqref{eq:ff_kernel}.

We consider two simulation scenarios. In the first, we set $\ell_r=\pi/2$, $\ell_a=\pi$, and $\zeta=1$, yielding weak Morse-type interactions. In the second, we choose $\ell_r=\pi/15$, $\ell_a=\pi/2$, and $\zeta=2$, thereby increasing the attraction strength and shortening the repulsion range. Fig.~\ref{fig:M_L_hat} reports the minimum leader mass $\widehat{M}^L_1$ satisfying the feasibility constraints \eqref{eq:constraints_feasibility} in Theorem~\ref{th:feasibility_theorem}, as a function of the concentration parameter $\kappa$ of the desired followers’ distribution and the diffusion coefficient $D$. Regions where no $M^L \in (0,1)$ satisfies Theorem~\ref{th:feasibility_theorem} are highlighted in yellow. For this example, closed-form expressions of $g_1$, $g_2$, and $C$ are available (see \cite{maffettone2025leader}). In panel \ref{fig:M_L1_weak}, it holds that $\widehat{M}^L_2 \geq 1$ for all $\kappa$, so the feasibility condition reduces to $M^L \geq \widehat{M}^L_1$. In contrast, in the second scenario shown in panel \ref{fig:M_L1_strong}, there exists a region where $\widehat{M}^L_2 < 1$.

Finally, Fig.~\ref{fig:mass_difference} compares our results with those in \cite{maffettone2025leader}, where follower–follower interactions were neglected (we denote the minimum leaders mass ensuring feasibility in this case by $\widehat{M}^L_{\mathrm{nom}}$). In panel \ref{fig:mass_difference_weak}, the mismatch $\widehat{M}^L - \widehat{M}^L_{\mathrm{nom}}$ is consistently positive, indicating that inter-follower interactions require a larger leader population to achieve the same target density. In this regime, interactions act as a disturbance with respect to the desired followers’ concentration. Additionally, the feasibility regions with and without inter-follower interactions overlap. This increased control effort persists even in the case of purely repulsive follower–follower interactions (results omitted for brevity). In contrast, panel \ref{fig:mass_difference_strong} shows that for sufficiently strong inter-follower interactions there are regions where the required leader mass is smaller than in the non-interacting case (regions in panel \ref{fig:mass_difference_strong} where the mismatch $\widehat{M}^L - \widehat{M}^L_{\mathrm{nom}}$ is negative). This suggests that, when the open-loop steady-state displacement of the followers is statistically close to the target density, inter-follower interactions can assist in achieving the desired distribution. In this scenario, we also observe an increased feasibility region in the presence of inter-follower interactions: the feasibility region with interacting followers is that below the red curve in Panel \ref{fig:mass_difference_strong}, while, below the black curve we highlight the feasibility region without inter-follower interactions.

\section{Control Design} \label{sec:control_design}
\begin{figure}[t]
\centering
\resizebox{\columnwidth}{!}{%
\begin{tikzpicture}[
    >=latex,
    font=\normalsize,
    node distance=0.6cm,
    block/.style={
        draw,
        rectangle,
        minimum height=1cm,
        align=center
    },
    sum/.style={
        draw,
        circle,
        minimum size=5mm,
        inner sep=0pt
    }
]

% Blocks
\node[block, minimum width=2.6cm] (refgen)
{Leaders' reference\\from \eqref{eq:rho_bar_L_2}};

\node[sum, right=of refgen] (sum) {};

\node[block, minimum width=2.1cm, right=of sum] (controller)
{Leaders'\\controller};

\node[block, minimum width=2.6cm, right=of controller] (mixture)
{Leaders' and\\Followers' mixture};

% Input
\draw[->] ([xshift=-0.6cm]refgen.west) -- (refgen.west)
node[midway, above] {$\bar{\rho}^{F}$};

% ref -> sum
\draw[->] (refgen.east) -- (sum.west)
node[midway, above] {$\bar{\rho}^{L}$};

% Minus sign (spostato a sinistra)
\node at ([xshift=-1.5mm,yshift=-2mm]sum.south) {$-$};

% sum -> controller
\draw[->] (sum.east) -- (controller.west)
node[midway, above] {$e^{L}$};

% controller -> mixture
\draw[->] (controller.east) -- (mixture.west)
node[midway, above] {$u$};

% Outputs
\draw[->] (mixture.east) -- ++(0.6cm,0)
node[midway, above] {$\rho^{F}$};

\draw[-] ([yshift=-0.3cm]mixture.east) -- ++(0.6cm,0)
node[midway, below] {$\rho^{L}$};

% Feedback (senza freccia nel tratto verticale)
\draw
([xshift=0.6cm,yshift=-0.3cm]mixture.east)
|- ([yshift=-0.8cm]sum.south);

% Freccia solo nell'ultimo tratto verso il sommatore
\draw[->]
([yshift=-0.8cm]sum.south) -- (sum.south);

\end{tikzpicture}%
}
\caption{Block diagram of the control architecture.}
\label{fig:block_diagram}
\end{figure}
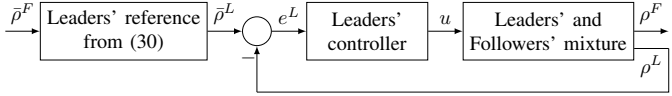

This section extends the control methodology developed in \cite{maffettone2025leader} (Section VI) to accommodate follower-follower interactions. Our approach consists of three stages: first, assuming to fulfill the feasibility constraints of Theorem~\ref{th:feasibility_theorem}, we compute $\bar{\rho}^L$ using \eqref{eq:rho_bar_L_2} -- under feasibility hypothesis, such a density integrates to $M^L$ and is positive; then, we design a control input $u$ in \eqref{eq:leaders} that drives the leader density to such a $\bar{\rho}^L$; third, we establish that achieving this leader configuration induces convergence of the follower density to the target profile $\bar{\rho}^F$. The block diagram of the proposed control architecture is given in Fig.~\ref{fig:block_diagram}.

\subsection{Leader Control Design}
To steer the leader density toward the feasible steady-state $\bar{\rho}^L$ computed in \eqref{eq:rho_bar_L_2},  we employ the feedback control law from \cite{maffettone2025leader}
\begin{align}\label{eq:control_u}
u(x, t) = -\frac{K}{\bar{\rho}^L(x) - e^L(x, t)}\int [\bar{\rho}^L(x) - \rho^L(x,t)] \,\mathrm{d}x,
\end{align}
where $K > 0$ is the control gain. This control is spatially periodic (cf. Corollary 1 in \cite{maffettone2025leader}) and preserves the total leader mass.

Under this control law, the leader density evolution admits the explicit solution
\begin{align}\label{eq:leaders_solution}
    \rho^L(x, t) = \bar{\rho}^L(x) + \Phi(x, t),
\end{align}
with
\begin{align}\label{eq:Phi}
    \Phi(x, t) = -\left[\bar{\rho}^L(x)-\rho^L_0(x)\right]\mathrm{e}^{-Kt},
\end{align}
demonstrating exponential convergence at rate $K$ from any initial condition $\rho^L_0(x)$ to the target distribution $\bar{\rho}^L(x)$.

\subsection{Follower Stability Analysis}
In this section, we study the followers' dynamics when leaders are controlled using \eqref{eq:control_u}.

We define the follower density error as
\begin{align}\label{eq:error_F}
    e^F(x, t) = \bar{\rho}^F(x) - \rho^F(x, t).
\end{align}
Substituting into the follower dynamics \eqref{eq:followers}, the error evolution satisfies 
\begin{multline}\label{eq:err_dyn}
    e^F_t(x, t) = De^F_{xx}(x, t) - D \bar{\rho}^F_{xx}(x)\\
    + \left[(\bar{\rho}^F(x) - e^F(x, t)) (v^{FL}(x, t) + v^{FF}(x, t))\right]_x,
\end{multline}
subject to the initial condition and periodic boundary conditions inherited from \eqref{eq:IC} and \eqref{eq:PBCs}, respectively.

Before establishing the stability of the follower dynamics, we present a preliminary Lemma.
\begin{lemma}\label{lemma:stability_nonlinear}
Consider the nonlinear dynamical system
\begin{align}\label{eq:eta_t}
    {\eta}_t(t) = \left(-\alpha + \beta\mathrm{e}^{-kt}\right) \eta(t) + \left(\gamma\mathrm{e}^{-kt} + \delta \eta(t)\right)\sqrt{\eta(t)},
\end{align}
where $\eta(0) \geq 0$ and $\alpha, \beta, \gamma, \delta, k > 0$ are positive constants. Then $\eta = 0$ is a locally asymptotically stable equilibrium point.
\end{lemma}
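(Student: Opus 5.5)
The plan is to remove the square root by the change of variables $w=\sqrt{\eta}\geq 0$, which turns \eqref{eq:eta_t} into a scalar Riccati-type equation with a vanishing additive forcing. Wherever $\eta>0$ we have $\eta_t = 2 w w_t$, and dividing \eqref{eq:eta_t} by $2w$ gives
\begin{align*}
    w_t = \frac{1}{2}\left(-\alpha+\beta\mathrm{e}^{-kt}\right)w + \frac{\gamma}{2}\mathrm{e}^{-kt} + \frac{\delta}{2}w^2 .
\end{align*}
On the set $\{\eta=0\}$ the right-hand side of \eqref{eq:eta_t} vanishes. The $w$-equation is locally Lipschitz, so comparison arguments apply to it directly, and the non-Lipschitz point $\eta=0$ of the original equation causes no trouble. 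In the $w$ variable the problem is therefore a linear decaying part, an exponentially vanishing input $\frac{\gamma}{2}\mathrm{e}^{-kt}$, and a quadratic destabilizing term $\frac{\delta}{2}w^2$. The latter is dominated near zero, which is what makes the statement local.

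The first step is to fix $T\geq 0$ such that $\beta\mathrm{e}^{-kt}\leq \alpha/2$ for all $t\geq T$, and to handle the transient interval $[0,T]$. There $\frac{1}{2}(-\alpha+\beta\mathrm{e}^{-kt})\leq \beta/2$. As long as $w\leq W$ for some a priori level $W$, the quadratic term is bounded by $\frac{\delta W}{2}w$, and Gronwall's inequality gives
\begin{align*}
    w(t)\leq \mathrm{e}^{(\beta+\delta W)t/2}\Big(w(0)+\tfrac{\gamma}{2}t\Big),\qquad t\in[0,T].
\end{align*}
A standard continuation argument then shows that if $w(0)$ is small enough (and $\gamma$ enters as a fixed constant), the bound $w\leq W$ is never violated on $[0,T]$. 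This yields $w(T)\leq c_1 w(0)+c_2\gamma$ with explicit constants $c_1,c_2$.

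The second step treats $t\geq T$. We restrict to the region $w\leq \alpha/(4\delta)$, where $\frac{\delta}{2}w^2\leq \frac{\alpha}{8}w$. Combining this with $\beta\mathrm{e}^{-kt}\leq\alpha/2$ gives the differential inequality
\begin{align*}
    w_t\leq -\frac{\alpha}{8}w+\frac{\gamma}{2}\mathrm{e}^{-kt}.
\end{align*}
The variation-of-constants formula then gives
\begin{align*}
    w(t)\leq \mathrm{e}^{-\alpha(t-T)/8}w(T)+\frac{\gamma}{2}\int_T^t \mathrm{e}^{-\alpha(t-s)/8}\mathrm{e}^{-ks}\,\mathrm{d}s .
\end{align*}
The integral is bounded by $\frac{\gamma}{2}\mathrm{e}^{-kT}/|\alpha/8-k|$, or by a $t\mathrm{e}^{-\alpha t/8}$ term in the resonant case $k=\alpha/8$, and it tends to zero. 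Hence the region $w\leq\alpha/(4\delta)$ is forward invariant provided that $w(T)+\frac{\gamma}{2}\mathrm{e}^{-kT}\sup_t\int\cdots$ lies below $\alpha/(4\delta)$. This condition defines the explicit basin of attraction, and on it $w(t)\to 0$ exponentially, so $\eta(t)=w(t)^2\to 0$.

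The last step assembles stability in the nonautonomous sense appropriate to \eqref{eq:eta_t}. Repeating the argument from an arbitrary initial time $t_0$ gives
\begin{align*}
    \sqrt{\eta(t)}\leq c_1\sqrt{\eta(t_0)}+c_3\gamma\mathrm{e}^{-kt_0}
\end{align*}
for all $t\geq t_0$ inside the basin. The perturbation generated by the forcing term is thus controlled by the initial deviation plus a term that vanishes with the initial time. Together with the attractivity shown above, this is the local asymptotic stability of $\eta=0$ as a vanishing-perturbation (totally stable) equilibrium, and the estimate also quantifies the basin used later for the PDE error.

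I expect the main obstacle to be precisely this stability part. The term $\gamma\mathrm{e}^{-kt}\sqrt{\eta}$ is not Lipschitz at zero, so solutions can leave $\eta=0$. The excursion is therefore bounded not by $\eta(t_0)$ alone but by $\eta(t_0)$ plus an $O(\gamma^2\mathrm{e}^{-2kt_0})$ contribution, and the argument must make the dependence on $t_0$ explicit and rely on the decay of the forcing rather than on a uniform $\varepsilon$–$\delta$ estimate at a fixed initial time.
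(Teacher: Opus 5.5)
\textbf{There is a genuine gap: your transient step on $[0,T]$ does not hold, and the final assembly depends on it.}

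Shrinking $w(0)$ does not close your continuation argument. Even with $w(0)=0$, your bound gives $w(T)\le \mathrm{e}^{(\beta+\delta W)T/2}\,\gamma T/2$. So keeping $w\le W$ on $[0,T]$, and then entering the region $\{w\le\alpha/(4\delta)\}$ where Step~2 works, are smallness conditions on $\gamma$ relative to $\alpha,\beta,\delta,k$. They are not conditions on the initial datum. The same happens when $T=0$, through your invariance requirement. The phrase ``$\gamma$ enters as a fixed constant'' hides exactly this.

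Since the lemma allows every $\gamma>0$, this cannot be patched. For $w\ge 0$ one has $w_t\ge -\frac{\alpha}{2}w+\frac{\gamma}{2}\mathrm{e}^{-kt}$. Hence every solution with $w(0)\ge 0$ that exists up to $t=1$ satisfies $w(1)\ge c\gamma$, with $c=\frac{1}{2}\int_0^1\mathrm{e}^{-\alpha(1-s)/2-ks}\,\mathrm{d}s$. If $c\gamma>\alpha/\delta$, then from that time on $w_t\ge \frac{w}{2}(\delta w-\alpha)>0$, and $w$ blows up in finite time. So with $\xi(0)=1$, no $\eta(0)$, however small, is attracted. This is consistent with Appendix~A and Fig.~\ref{fig:ras_estimation}(b), where the paper allows the basin estimate on $\xi=1$ to be empty. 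Your Step~3 relies on ``the attractivity shown above'' and on a bound from an arbitrary $t_0$, so it inherits the error. Your observation that $\varepsilon$--$\delta$ stability at a fixed initial time fails is, however, correct.

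\textbf{What does hold, and how to repair your argument.} For all positive constants, the statement that holds is the one the paper's proof establishes. Appending $\xi=\mathrm{e}^{-kt}$, the origin of the autonomous $(\eta,\xi)$ system is locally asymptotically stable. The paper reads this off the Jacobian $\mathrm{diag}(-\alpha,-k)$, with $(\alpha^2/\delta^2,0)$ a saddle. There smallness is required of both $\eta$ and $\xi$, i.e.\ of the forcing, so no condition on $\gamma$ arises.

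Your Step~2 proves exactly this once you drop the transient phase and start at any $t_0$ with $\beta\mathrm{e}^{-kt_0}\le\alpha/2$. Bound the integral by $\frac{8}{\alpha}\mathrm{e}^{-kt_0}$, which also removes the resonant case. Then, whenever $\sqrt{\eta(t_0)}+\frac{4\gamma}{\alpha}\mathrm{e}^{-kt_0}<\frac{\alpha}{4\delta}$, you get $\sqrt{\eta(t)}\le\sqrt{\eta(t_0)}+\frac{4\gamma}{\alpha}\mathrm{e}^{-kt_0}$ for all $t\ge t_0$, and $\eta(t)\to 0$.

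Repaired this way, your route replaces the paper's linearization with explicit comparison estimates, and it yields an explicit basin in the variables $(\eta(t_0),\mathrm{e}^{-kt_0})$. Your substitution also helps the paper's route. In $(w,\xi)$ the augmented field is polynomial, with Jacobian $\left[\begin{smallmatrix}-\alpha/2&\gamma/2\\0&-k\end{smallmatrix}\right]$ at the origin, so the standard linearization theorem applies directly. In $(\eta,\xi)$, by contrast, the term $\gamma\xi\sqrt{\eta}$ is not $C^1$ near the origin.
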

\begin{proof}
We introduce an auxiliary variable $\xi(t) = \mathrm{e}^{-kt}$ satisfying ${\xi}_t = -k\xi$ with $\xi(0) = 1$. This transforms \eqref{eq:eta_t} into the augmented system
\begin{subequations}\label{eq:aug_sys}
    \begin{align}
        {\eta}_t(t) &= (-\alpha + \beta\xi(t)) \eta(t) + (\gamma\xi(t) + \delta \eta(t))\sqrt{\eta(t)}, \label{eq:aug_eta}\\
        {\xi}_t(t) &= -k\xi(t), \label{eq:aug_xi}
    \end{align}
\end{subequations}
with initial conditions $\eta(0) \geq 0$ and $\xi(0) = 1$.

Since the $\xi$-dynamics \eqref{eq:aug_xi} is decoupled and linear, we have $\xi(t) = \mathrm{e}^{-kt} \to 0$ as $t \to \infty$. This monotonic decay precludes the existence of limit cycles in the augmented system \eqref{eq:aug_sys}.

The equilibrium points of \eqref{eq:aug_sys} are found by setting ${\eta}_t = {\xi}_t = 0$. This yields ($i$) the origin $(0, 0)$, and ($ii$) the point $(\alpha^2/\delta^2, 0)$.

To analyze stability, we compute the Jacobian matrix at each equilibrium. At the origin:
\begin{align}
    J(0,0) = \begin{bmatrix}
        -\alpha & 0 \\
        0 & -k
    \end{bmatrix},
\end{align}
which has eigenvalues $\lambda_1 = -\alpha < 0$ and $\lambda_2 = -k < 0$, confirming asymptotic stability.

At $(\alpha^2/\delta^2, 0)$, the Jacobian has one positive and one negative eigenvalue, characterizing it as a saddle point.

Therefore, the origin is locally asymptotically stable for the augmented system \eqref{eq:aug_sys}, which implies that $\eta = 0$ is locally asymptotically stable for the original system \eqref{eq:eta_t}.
\end{proof}

\begin{figure}
    \centering
    \begin{subfigure}{0.22\textwidth}
        \centering
        \includegraphics[width=\textwidth]{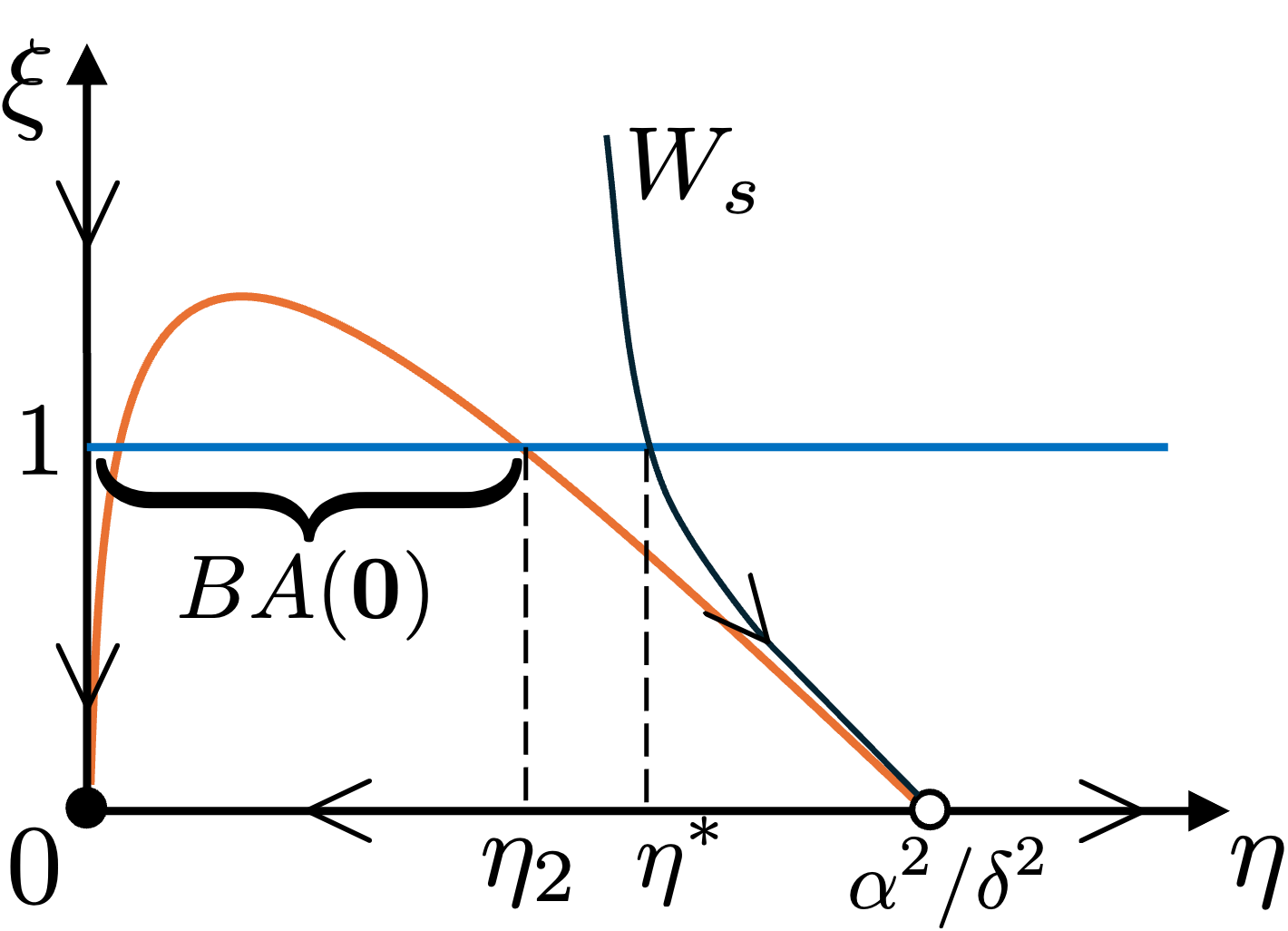}
        \caption{}
        \label{fig:ras_case1}
    \end{subfigure}
    \begin{subfigure}{0.22\textwidth}
        \centering
        \includegraphics[width=\textwidth]{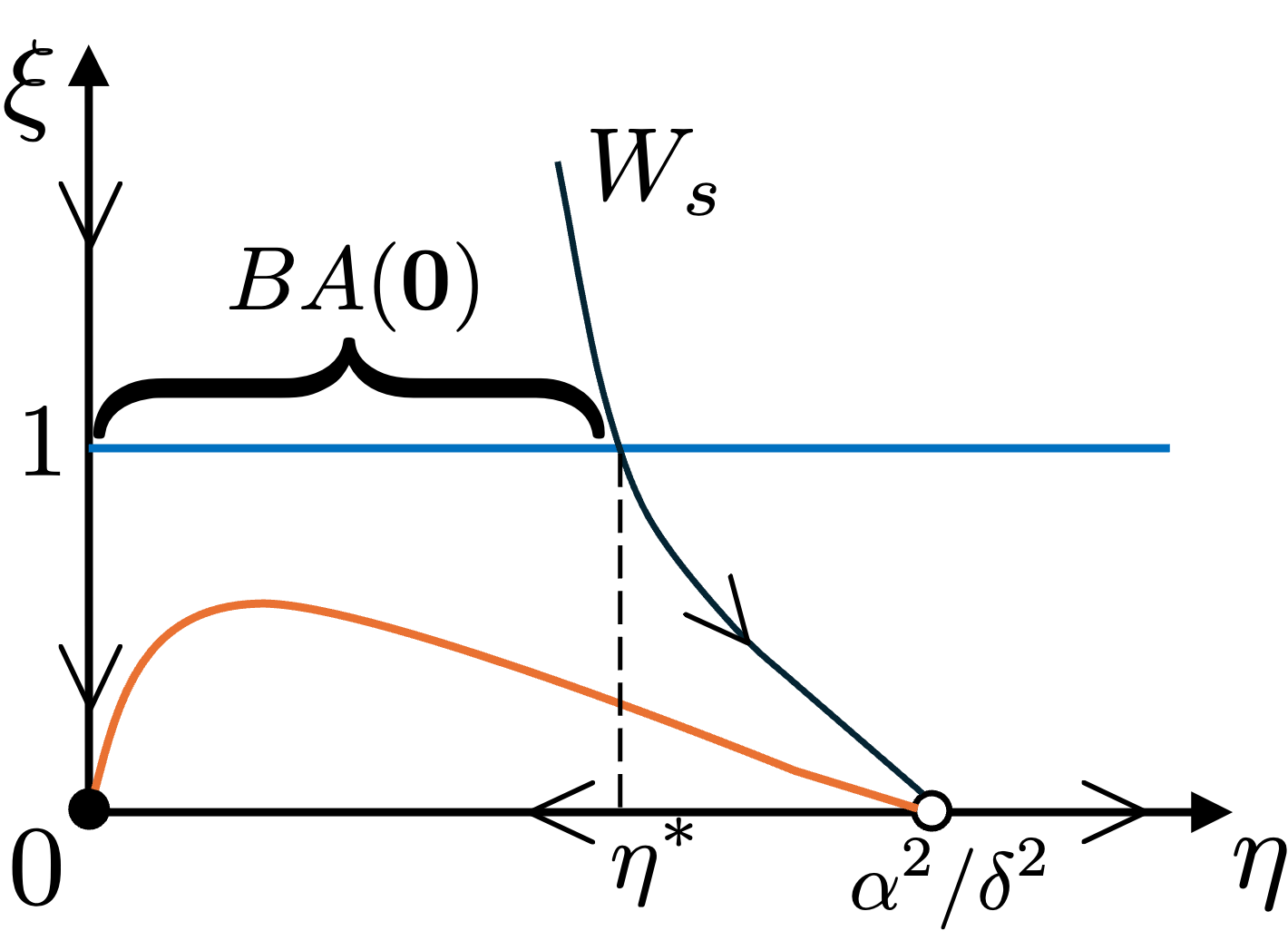}
        \caption{}
        \label{fig:ras_case2}
    \end{subfigure}
    \caption{Phase-portraits of the augmented system \eqref{eq:aug_sys} (in blue we highlight the set of possible initial conditions according to the constraint $\xi(0)=1$, in orange we show the nullcline of $\eta_t$): in (a) the nullcline intersect the axis of intial condition returning an estimation of the basin of attraction, in (b) it does not.
    }
    \label{fig:ras_estimation}
\end{figure}
\begin{rem}[Basin of Attraction]\label{rem:BA}
The stable manifold $W^s$ of the saddle point $(\alpha^2/\delta^2, 0)$ partitions the $(\eta,\xi)$-plane into two regions. Since the $\xi$-axis ($\eta = 0$) is an invariant manifold with trajectories converging to the origin, $W^s$ cannot intersect this axis.

Given the initial condition constraint $\xi(0) = 1$, the basin of attraction of the origin is characterized by
\begin{align}\label{eq:basin}
    \mathcal{B}(\mathbf{0}) = \left\{(\eta, 1) : \eta < \eta^*\right\},
\end{align}
where $\eta^*$ denotes the $\eta$-coordinate of the intersection of $W^s$ with the line $\xi = 1$.

Since an analytical expression for $W^s$ was not found, we provide a conservative estimate of $\mathcal{B}(\mathbf{0})$ using the ${\eta}_t = 0$ nullcline intersections with the initial condition line $\xi = 1$ (see Fig.~\ref{fig:ras_estimation} and Appendix~A for details).

In the fast decay limit ($k \to \infty$), the system rapidly approaches $\xi = 0$, reducing to
\begin{align}
    \dot{\eta} = -\alpha \eta + \delta \eta^{3/2}.
\end{align}
For this limiting system, all trajectories with $\eta(0) < \alpha^2/\delta^2$ converge to the origin, providing an explicit estimate of the basin of attraction.
\end{rem}

\begin{theorem}[Local Stability of Follower Dynamics]\label{thm:follower_stability}
Assume the feasibility conditions of Theorem~\ref{th:feasibility_theorem} are satisfied. If the diffusion coefficient and interaction parameters satisfy
\begin{align}\label{eq:stability_cond}
    D\left(2 - \Vert g_1\Vert_\infty \right) > F,
\end{align}
where $g_1$ is defined in \eqref{eq:g1} and 
\begin{align}\label{eq:F_def}
    F = {2(\Vert \bar{\rho}^F(\cdot)\Vert_2 \left\Vert f^{FF}_x(\cdot)\right\Vert_2 + \left\Vert \bar{\rho}^F_x(\cdot)\right\Vert_2 \Vert f^{FF}(\cdot)\Vert_2)},
\end{align}
then the error dynamics \eqref{eq:err_dyn} converges locally to zero in $\mathcal{L}^2(\mathcal{S})$, i.e., $\lim_{t\to\infty} \Vert e^F(\cdot,t)\Vert_2 = 0$ for sufficiently small initial errors.
\end{theorem}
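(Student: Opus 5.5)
We take the Lyapunov functional $V(t)=\tfrac12\Vert e^F(\cdot,t)\Vert_2^2$ and aim at a scalar differential inequality for $\eta(t)=\Vert e^F(\cdot,t)\Vert_2^2$ of the form \eqref{eq:eta_t}. Lemma~\ref{lemma:stability_nonlinear} then gives local convergence, with the basin estimated as in Remark~\ref{rem:BA}. By comparison, $\eta$ stays below the solution of the majorizing ODE, which tends to zero.

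\emph{Step 1: rewrite the error dynamics.} Use \eqref{eq:leaders_solution}. Then $v^{FL}=\bar v^{FL}+f^{FL}*\Phi$, and since $v^{FF}$ is linear in $\rho^F$, $v^{FF}=\bar v^{FF}-f^{FF}*e^F$. By the steady-state relation \eqref{eq:followers_ss}, the terms $-D\bar\rho^F_{xx}+[\bar\rho^F(\bar v^{FL}+\bar v^{FF})]_x$ cancel. What remains in \eqref{eq:err_dyn} is
\begin{multline*}
e^F_t = De^F_{xx} - [e^F\bar v]_x + [\bar\rho^F (f^{FL}*\Phi)]_x - [\bar\rho^F (f^{FF}*e^F)]_x\\
 - [e^F (f^{FL}*\Phi)]_x + [e^F(f^{FF}*e^F)]_x ,
\end{multline*}
with $\bar v=\bar v^{FL}+\bar v^{FF}=D\bar\rho^F_x/\bar\rho^F$ by \eqref{eq:vfl_bar}.

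\emph{Step 2: estimate each term after multiplying by $e^F$ and integrating by parts on $\mathcal{S}$.} Periodicity removes boundary terms.
\begin{itemize}
\item The diffusion term gives $-D\Vert e^F_x\Vert_2^2$.
\item The drift term gives $-\tfrac12\int \bar v_x (e^F)^2$, and $\bar v_x=Dg_1$. This is bounded by $\tfrac{D}{2}\Vert g_1\Vert_\infty\Vert e^F\Vert_2^2$.
\item Wirtinger/Poincaré on $\mathcal{S}$ applies because $e^F$ has zero mean by mass conservation. It gives $-D\Vert e^F_x\Vert_2^2\le -D\Vert e^F\Vert_2^2$. The linear part therefore contributes $-\tfrac D2(2-\Vert g_1\Vert_\infty)\eta$.
\item The linear inter-follower term is $-\int [\bar\rho^F(f^{FF}*e^F)]_x e^F$. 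Expand the derivative as $\bar\rho^F_x(f^{FF}*e^F)+\bar\rho^F(f^{FF}_x*e^F)$. Use Cauchy--Schwarz together with the estimate $\Vert f*g\Vert_\infty\le\Vert f\Vert_2\Vert g\Vert_2$. Bounding $\int|\bar\rho^F_x||e^F|\le\Vert\bar\rho^F_x\Vert_2\Vert e^F\Vert_2$ and likewise the other piece, the term is at most $\tfrac F2\,\eta$ with $F$ as in \eqref{eq:F_def}. Matching the factor $2$ in $F$ may require tracking constants or using $V=\tfrac12\eta$ appropriately.
\end{itemize}
Hypothesis \eqref{eq:stability_cond} then makes $\alpha=D(2-\Vert g_1\Vert_\infty)-F>0$.

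\emph{Step 3: the remaining terms.}
\begin{itemize}
\item The forcing term $[\bar\rho^F(f^{FL}*\Phi)]_x$ is bounded by $c_1\Vert\rho^L_0-\bar\rho^L\Vert_2\,e^{-Kt}\sqrt\eta$. Here $\Vert\Phi\Vert_2$ decays like $e^{-Kt}$ by \eqref{eq:Phi}, and $f^{FL}$ and $f^{FL}_x$ are bounded except for a Dirac part at the origin, which yields a local term in $\Phi$. This gives the $\gamma e^{-kt}\sqrt\eta$ term with $k=K$.
\item The term $[e^F(f^{FL}*\Phi)]_x e^F$, after integration by parts, equals $\tfrac12\int (f^{FL}*\Phi)_x (e^F)^2$. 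It is bounded by $\beta e^{-Kt}\eta$.
\item The quadratic term $\int[e^F(f^{FF}*e^F)]_x e^F=\tfrac12\int (f^{FF}_x*e^F)(e^F)^2$ is bounded by $\tfrac12\Vert f^{FF}_x\Vert_2\sqrt\eta\,\eta$. This gives $\delta\eta^{3/2}$.
\end{itemize}
Collecting the pieces yields $\eta_t\le(-\alpha+\beta e^{-Kt})\eta+(\gamma e^{-Kt}+\delta\eta)\sqrt\eta$, and the comparison principle with Lemma~\ref{lemma:stability_nonlinear} concludes.

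\emph{Main obstacle.} The difficulty lies in the repulsive kernel \eqref{eq:rep_kern}. It has a jump at $0$, so $f^{FL}_x$ contains a delta, and $(f^{FL}*\Phi)_x$ involves $\Phi$ pointwise. The bound $\int (f^{FL}*\Phi)_x(e^F)^2$ then needs $\Vert\Phi\Vert_\infty$, which is finite since $\rho^L_0,\bar\rho^L\in\mathcal H^2\subset L^\infty$. I would handle this using the deconvolution identity from \cite{maffettone2025leader}, namely $(f^{FL}*\rho)_x=2\rho-\tfrac{1}{\ell^2}\int(f^{FL}*\rho)+\text{const}$, so that all constants stay explicit.
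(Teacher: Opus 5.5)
Your proposal is correct and follows essentially the paper's route. It uses the Lyapunov functional $\Vert e^F\Vert_2^2$, integration by parts on $\mathcal{S}$, and the H\"older/Young bounds on the inter-follower terms that produce $F$ and the $\delta\eta^{3/2}$ term, then concludes with Lemma~\ref{lemma:stability_nonlinear} via the comparison principle, and your constants (including $\alpha = D(2-\Vert g_1\Vert_\infty)-F$) match the paper's. The only difference is bookkeeping: you derive explicitly, via Wirtinger and the jump of $f^{FL}$ at the origin, the linear and leader-transient bounds that the paper imports from steps (70)--(74) of \cite{maffettone2025leader}, and you group the transient as $f^{FL}*\Phi$ rather than splitting into $(1-\mathrm{e}^{-Kt})$ and $\mathrm{e}^{-Kt}$ parts (also, the integral term in your deconvolution identity should carry $+1/\ell^2$, which does not affect the bounds).
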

\begin{proof}
    Substituting \eqref{eq:leaders_solution} into \eqref{eq:err_dyn}, (recalling $v^{FL} = f^{FL}*\rho^L$), we obtain
    \begin{multline}\label{eq:err_dyn2}
        e^F_t(x, t) = D(e^F_{xx}(x, t) - \bar{\rho}^F_{xx}(x))\\
        +\left[(\bar{\rho}^F(x) - e^F(x, t)) ((f^{FL}*\bar{\rho}^L))(x)+(f^{FL}*\Phi)(x, t))\right]_x\\
        +\left[(\bar{\rho}^F(x) - e^F(x, t)) v^{FF}(x, t)\right]_x.
    \end{multline}
    In feasible scenarios, we know $f^{FL}*\bar{\rho}^L=\bar{v}^{FL}$. Hence, using the expression of $\bar{v}^{FL}$ in \eqref{eq:vfl_bar} into \eqref{eq:err_dyn2}, yields
    \begin{multline}\label{eq:perturbed_err}
        e^F_t(x, t) = De^F_{xx}(x, t)+D \left(1-\mathrm{e}^{-K t}\right)\left[e^F(x, t)\frac{\bar{\rho}^F_x(x)}{\bar{\rho}^F(x)}\right]_x\\+
        \mathrm{e}^{-K t}[(\bar{\rho}^F(x)-e^F(x, t)) ((f^{FL}*\rho^L_0)(x) + (f^{FF}*\bar{\rho}^F)) \\ 
        -D\bar{\rho}^F_{x}(x)]_x
        +\left[(e^F(x, t)-\bar{\rho}^F(x)) (f^{FF}*e^F)(x, t)\right]_x,
    \end{multline}
    where we also considered \eqref{eq:Phi} and \eqref{eq:vff}. 
    
    We introduce the Lyapunov functional $\Vert e^F\Vert_2^2$, whose time derivative is $\left(\Vert e^F\Vert_2^2\right)_t = \int_\mathcal{S} e^F e^F_t\,\mathrm{d}x$. In consideration of \eqref{eq:perturbed_err}, we can write

    \begin{multline} \label{eq:Vdot_reply}
        (\Vert e^F(\cdot,t) \Vert_2^2)_t =  2D \int_\mathcal{S} e^F(x,t) e^F_{xx}(x,t) \, \mathrm{d}x + 2\mathrm{e}^{-Kt}\times \\
         \times\int_\mathcal{S} e^F(x,t)[(\bar{\rho}^F(x)-e^F(x,t))((f^{FL}*\rho^L_0)(x) +(f^{FF}*\bar{\rho}^F)(x)) \\
        -D\bar{\rho}^F_x]_x \, \mathrm{d}x - 2D(1-\mathrm{e}^{-Kt}) \int_\mathcal{S} e^F(x,t) \left[e^F(x,t) \frac{\bar{\rho}^F_x(x)} {\bar{\rho}^F(x)}\right]_x \, \mathrm{d}x \\
        +{2}\int_\mathcal{S} e^F(x,t) [(e^F(x,t)-\bar{\rho}^F(x))(f^{FF}*e^F)(x,t)]_x \, \mathrm{d}x
    \end{multline}
\begin{figure}
    \centering
    \begin{subfigure}{0.3\columnwidth}
        \centering
        \includegraphics[width=\textwidth]{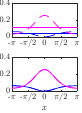}
        \caption{}
        \label{fig:initial_densitites}
    \end{subfigure}
    \hfill
    \begin{subfigure}{0.6\columnwidth}
        \centering
        \includegraphics[width=\textwidth]{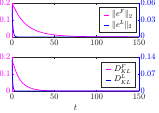}
        \caption{}
        \label{fig:errors}
    \end{subfigure}
    \caption{Regulation trial: (a) initial and final profiles of followers' (magenta) and leaders' (blue) densities, with dashed lines representing the reference distributions; (b) evolution of percentage errors and KL divergences.}
    \label{fig:validation}
\end{figure}
    
    Up to the last integral on the right hand-side, \eqref{eq:Vdot_reply} is equivalent to Equation (70) of \cite{maffettone2025leader}. Hence, we can apply the steps from (70) to (74) of \cite{maffettone2025leader} yielding the bound
    \begin{multline}\label{eq:replyVdot_bound}
        \left(\Vert e^F(\cdot, t) \Vert_2^2\right)_t \leq (-2D+D\Vert g_1(\cdot) \Vert_\infty )\Vert e^F(\cdot, t) \Vert_2^2 \\+ \Vert h_1(\cdot)\Vert_\infty \mathrm{e}^{-K t}\Vert e^F(\cdot, t) \Vert_2^2 + 2\Vert h_2(\cdot)\Vert_2 \mathrm{e}^{-K t}\Vert e^F(\cdot, t)\Vert_2 \\
        +{2}\int_\mathcal{S} e^F(x, t) \left[(e^F(x, t)-\bar{\rho}^F(x))(f^{FF}*e^F)(x, t)\right]_x\,\mathrm{d}x,
\end{multline}    
where $h_1 = (f^{FL}*\rho^L_0)_x+(f^{FF}*\bar{\rho}^F)_x$ and $h_2 =[ \bar{\rho}^F((f^{FL}*\bar{\rho}^L_0) +(f^{FF}*\bar{\rho}^F)) - D\bar{\rho}^F_x]_x$. For the last term on the right-hand side of \eqref{eq:replyVdot_bound}, it holds
\begin{subequations}\label{eq:extra_terms}
    \begin{multline}
        \int_\mathcal{S} e^F(x, t)\left[e^F(x, t) (f^{FF}*e^F)(x, t)\right]_x \,\mathrm{d}x \overset{(a)}{=} \\\overset{(a)}{=}-\int_\mathcal{S}e^F_x(x, t) e^F(x, t) (f^{FF}*e^F)(x, t)\,\mathrm{d}x \overset{(b)}{=}\\\overset{(b)}{=} -\frac{1}{2}\int_\mathcal{S}\left[(e^F(x, t))^2\right]_x (f^{FF}*e^F)(x, t)\,\mathrm{d}x\overset{(c)}{=}\\\overset{(c)}{=}\frac{1}{2}\int_\mathcal{S}(e^F(x, t))^2 (f^{FF}_x*e^F)(x, t)\,\mathrm{d}x,
    \end{multline}
    \begin{multline}
        -\int_\mathcal{S}e^F(x, t)\left[\bar{\rho}^F(x) (f^{FF}*e^F)(x, t)\right]_x \,\mathrm{d}x \overset{(d)}{=}\\\overset{(d)}{=} - \int_\mathcal{S}e^F(x, t)\bar{\rho}^F_x(x) (f^{FF}*e^F)(x, t)\,\mathrm{d}x \\- \int_\mathcal{S}e^F(x, t)\bar{\rho}^F(x) (f^{FF}_x*e^F)(x, t)\,\mathrm{d}x
    \end{multline}
\end{subequations}
where in step ($a$) we used integration by parts of periodic functions, in step ($b$) we used the identity $[(e^F)^2]_x = 2e^F_xe^F$, in step ($c$) we applied integration by parts and the definition of derivative of a convolution, and in step ($d$) we expanded the derivative of the product considering the derivative of the convolution. We can then establish the bounds on \eqref{eq:extra_terms}
\begin{subequations}\label{eq:bounds}
    \begin{multline}
        \frac{1}{2}\left\vert\int_\mathcal{S} (e^F(x, t))^2 (f^{FF}_x*e^F)(x, t)\,\mathrm{d}x\right\vert \leq\\\leq    \frac{1}{2}\int_\mathcal{S} \left\vert(e^F(x, t))^2 (f^{FF}_x*e^F)(x, t)\right\vert\,\mathrm{d}x  \overset{(e)}{=}\\\overset{(e)}{=}\frac{1}{2}\Vert (e^F(\cdot, t))^2 (f^{FF}_x *e^F)(\cdot, t)\Vert_1 \overset{(f)}{\leq} \frac{1}{2} \Vert f^{FF}_x(\cdot)\Vert_2 \Vert e^F(\cdot) \Vert_2^3,
    \end{multline}
    \vspace{-15pt}
    \begin{multline}
        \left\vert \int_\mathcal{S}e^F(x, t)\bar{\rho}^F_x(x) (f^{FF}*e^F)(x, t)\,\mathrm{d}x \right\vert \leq \\ \leq \int_\mathcal{S} \left\vert e^F(x, t)\bar{\rho}^F_x(x) (f^{FF}*e^F)(x, t)\right\vert\,\mathrm{d}x \overset{(e)}{=}\\ \overset{(e)}{=} \left\Vert e^F(\cdot, t)\bar{\rho}^F_x(\cdot) (f^{FF}*e^F)(\cdot, t)\right\Vert_1 \overset{(f)}{\leq}\\\overset{(f)}{\leq} \Vert \bar{\rho}^F_x(\cdot)\Vert_2 \Vert f^{FF}(\cdot)\Vert_2\Vert e^F(\cdot) \Vert_2^2,
    \end{multline}
    \vspace{-15pt}
    \begin{multline}
        \left\vert \int_\mathcal{S}e^F(x, t)\bar{\rho}^F(x) (f^{FF}_x*e^F)(x, t)\,\mathrm{d}x \right\vert \leq \\ \leq \int_\mathcal{S} \left\vert e^F(x, t)\bar{\rho}^F(x) (f^{FF}_x*e^F)(x, t)\right\vert\,\mathrm{d}x \overset{(e)}{=}\\ \overset{(e)}{=} \left\Vert e^F(\cdot, t)\bar{\rho}^F(\cdot) (f^{FF}_x*e^F)(\cdot, t)\right\Vert_1 \overset{(f)}{\leq}\\\overset{(f)}{\leq} \Vert \bar{\rho}^F(\cdot)\Vert_2 \Vert f^{FF}_x(\cdot)\Vert_2\Vert e^F(\cdot) \Vert_2^2
    \end{multline}
\end{subequations}
where in step ($e$) we used the definition of $\mathcal{L}^1$ norm, in step ($f$) we used the H\"older's inequality and the Young's inequality for convolutions. Using the bounds \eqref{eq:bounds} into \eqref{eq:replyVdot_bound} yields
\begin{multline}\label{eq:final_bound}
    \left(\Vert e^F(\cdot, t) \Vert_2^2\right)_t \leq (-2D+D\Vert g_1(\cdot) \Vert_\infty + F)\Vert e^F(\cdot, t) \Vert_2^2 \\+ \Vert h_1(\cdot)\Vert_\infty \mathrm{e}^{-K t}\Vert e^F(\cdot, t) \Vert_2^2 + 2\Vert h_2(\cdot)\Vert_2 \mathrm{e}^{-K t}\Vert e^F(\cdot, t)\Vert_2\\+ \Vert f^{FF}_x(\cdot)\Vert_2 \Vert e^F(\cdot,t) \Vert_2^3,
\end{multline}
where $F$ is defined in \eqref{eq:F_def}. The bounding system in \eqref{eq:final_bound} is in the form discussed in Lemma \ref{lemma:stability_nonlinear} (with $\eta = \Vert e^F\Vert_2^2$, $\alpha = \vert -2D + D\Vert g_1 \Vert_\infty + F\vert$, $\beta = \Vert h_1\Vert_\infty$, $\gamma = 2\Vert h_2\Vert_2$ and $\delta = 1/2\Vert f^{FF}_x\Vert_2$, $k = K$), so that we can conclude about the local stability of the error dynamics in $\mathcal{L}^2(\mathcal{S})$ by means of the comparison principle \cite{khalil2002nonlinear}. Note that the reasoning in Remark \ref{rem:BA} holds for the basin of attraction of the origin. %Note that, exploiting Lemma \ref{lemma:lemma1}, it is also possible to explicitly recover the basin of attraction of the origin.
\end{proof}

\begin{rem}
When follower-follower interactions are absent ($F = 0$), condition \eqref{eq:stability_cond} reduces to $\|g_1\|_\infty < 2$, recovering the stability condition derived in \cite{maffettone2025leader}. The presence of cohesive interactions ($F > 0$) tightens this bound to $\|g_1\|_\infty < 2 - F/D$, quantifying how inter-follower attraction reduces the margin for stable density control. Intuitively, cohesive followers resist spatial reorganization by the leaders, requiring either stronger diffusion or less concentrated target distributions.
\end{rem}

\begin{rem}
The feasibility conditions \eqref{eq:constraints_feasibility} can be 
verified offline given the target density $\bar{\rho}^F$. The control 
law \eqref{eq:control_u} requires only pointwise evaluation of $\rho^L$, enabling distributed implementation via 
local density estimation \cite{dilorenzo2025decentralized}.
\end{rem}

\section{Numerical Validation}\label{sec:numerical_validation}
\begin{figure*}
    \centering
    \begin{subfigure}{0.32\textwidth}
        \centering
        \includegraphics[width=\textwidth]{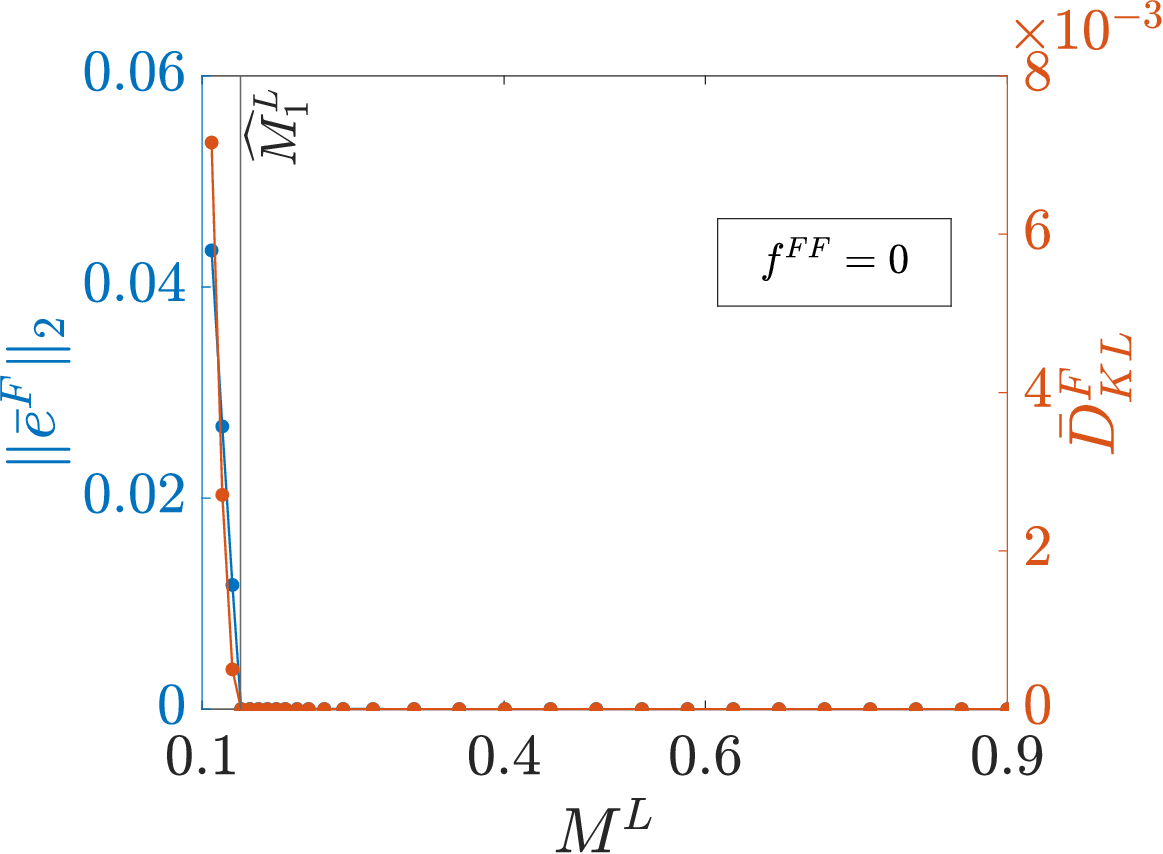}
        \caption{}
        \label{ffig:macro_ml_strong_no}
    \end{subfigure}
    \hfill
    \begin{subfigure}{0.32\textwidth}
        \centering
        \includegraphics[width=\textwidth]{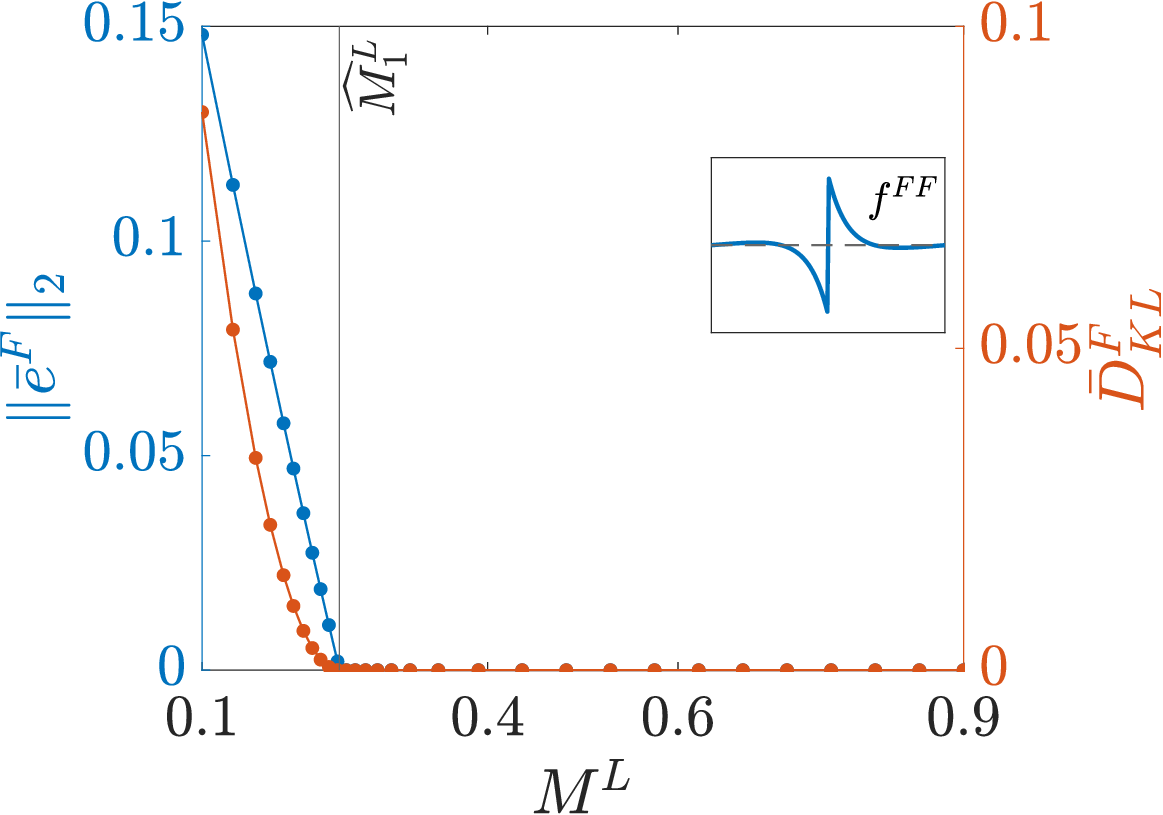}
        \caption{}
        \label{fig:macro_ml_weak_ff}
    \end{subfigure}
    \hfill
    \begin{subfigure}{0.32\textwidth}
        \centering
        \includegraphics[width=\textwidth]{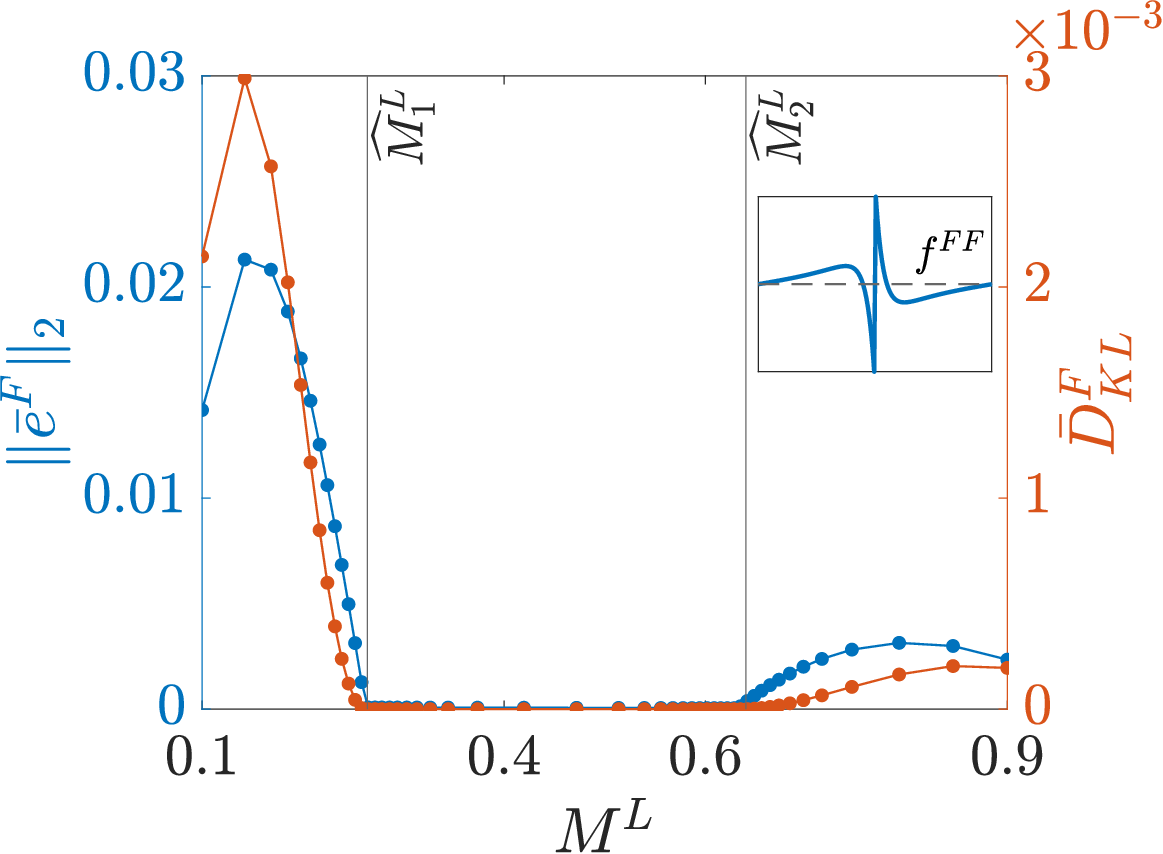}
        \caption{}
        \label{fig:macro_ml_strong_ff}
    \end{subfigure}
    \caption{Steady-state followers' error as a function of the leaders' mass in (a) absence, (b) weak and (c) strong follower-follower interactions.}
    \label{fig:varying_ml}
\end{figure*}
We validate the proposed control strategy through numerical simulations of system \eqref{eq:themodel}. For the numerical implementation, we discretize the PDEs using central differences on a spatial mesh of 500 points and employ forward Euler time integration with step size $\Delta t = 0.01$ over a time horizon of $T = 150$ units.

The target follower density is chosen as a von Mises distribution \eqref{eq:vonMises} with mean $\mu = 0$ and concentration parameter $\kappa = 1$. The leader-follower interactions are modeled using the repulsive kernel with length scale $\ell = \pi$, while follower-follower interactions employ the Morse-type kernel \eqref{eq:ff_kernel} with parameters $\ell_r = \pi/2$, $\ell_a = \pi$, and $\zeta = 1$. We set the diffusion coefficient to $D = 0.02$ and the control gain to $K = 1$. The mass distribution between leaders and followers is chosen as $M^F = 0.75$ and $M^L = 0.25$, respectively, which satisfies the feasibility conditions established in Theorem~\ref{th:feasibility_theorem}.

To quantify performance of the proposed strategy, we monitor $\|e^i\|_2$, for $i=L, F$, and, as an extra performance index, we also consider the Kullback-Leibler (KL) divergence \cite{kullback1951information} between the desired leaders' (followers') density and the leaders' (followers') density, that is
\begin{align} \label{eq:KL_divergence}
    D_{KL}^i(t) = \int_\Omega \rho^i(x,t) \log\left(\frac{\rho^i(x,t)}{\bar{\rho}^i(x)}\right) \, \mathrm{d} x, \quad i\in\{L,F\}.
\end{align}

Fig.~\ref{fig:validation} presents the simulation results, demonstrating that both leader and follower errors $\| e^F \|_2$ and $\| e^L \|_2$ decay asymptotically to zero, thereby confirming the theoretical convergence guarantees established in our stability analysis. We remark that convergence is attained also choosing initial conditions consistently far away from the desired equilibrium, pointing at the conservativeness of our estimate of the basin of attraction.

For validating the estimates derived in Theorem~\ref{th:feasibility_theorem} for the leaders' mass ensuring feasibility $\widehat{M}^L_{1,2}$, we conduct a simulation campaign by varying $M^L$ in the interval $[0.1, 0.9]$ (and, coherently, $M^F$), while fixing $\bar{\rho}^F$ as the von Mises distribution in \eqref{eq:vonMises}, with $\mu$ and $\kappa$ denoting its mean and concentration coefficient. We initialize the followers' and leaders' densities at their steady states, $\bar{\rho}^F$ and $\bar{\rho}^L$, respectively, and let the simulation run for $T=100$ time units\footnote{In regions where feasibility is not guaranteed by Theorem~\ref{th:feasibility_theorem}, a positive leaders' density integrating to $M^L$ that makes $\bar{\rho}^F$ a steady state for the followers does not exist. In such scenarios, we choose $B$ in \eqref{eq:rho_bar_l} so that $\min_{x}\bar{\rho}^L=0$ and re-normalize $\bar{\rho}^L$ so that it integrates to $M^L$.}. Controlling the leaders via \eqref{eq:control_u}, we record $\Vert \bar{e}^F\Vert_2=\Vert {e}^F(\cdot, T)\Vert_2$ and $\bar{D}_{KL}^F = D_{KL}^F(T)$ while varying $M^L$ under three different settings: (a) no follower-follower interactions (other parameters: $\mu = 0$, $\kappa = 1$, $D=0.04$, $K = 1$); (b) weak follower-follower Morse interactions, i.e., $\ell_r = \pi/2$, $\ell_a = \pi$ and $\alpha = 1$ (other parameters: $\mu = 0$, $\kappa = 1$, $D=0.02$, $K = 1$); (c) strong follower-follower Morse interactions, i.e., $\ell_r = \pi/15$, $\ell_a = \pi/2$ and $\alpha = 2$ (other parameters: $\mu = 0$, $\kappa = 2$, $D=0.16$, $K = 1$). The results for these three settings are reported in Fig.~\ref{fig:varying_ml}, confirming the accuracy of our theoretical estimates; note that we sample more values of $M^L$ near the feasibility thresholds. In particular, in the first scenario (no inter-follower interactions, cf. Fig.~\ref{ffig:macro_ml_strong_no}), we observe that above the threshold $\widehat{M}^L_1 \approx 0.14$ the error is consistently zero; moreover, $\widehat{M}^L_2>1$. A similar behavior is observed with weak inter-follower interactions (cf. Fig.~\ref{fig:macro_ml_weak_ff}), with $\widehat{M}^L_1 \approx 0.24$. Finally, with strong inter-follower interactions (cf. Fig.~\ref{fig:macro_ml_strong_ff}), we obtain $\widehat{M}^L_1\approx 0.25$ and $\widehat{M}^L_2\approx 0.63$: when $\widehat{M}^L_1\leq M^L\leq\widehat{M}^L_2$ the error remains zero; otherwise, as expected from the analysis, followers cannot maintain the profile described by $\bar{\rho}^F$, resulting in non-zero steady-state errors.

{
\section{Extension to higher-dimensions}\label{sec:ext_hd}
Our results can be straightforwardly extended to higher dimensional periodic domains $\Omega = [-\pi, \pi]^d$ for $d=2,3$. The model \eqref{eq:themodel} becomes
\begin{subequations}\label{eq:hd_model}
    \begin{align}
        \rho_t^L(\mathbf{x},t) + \nabla\cdot\left[\rho^L(\mathbf{x}, t) \mathbf{u}(\mathbf{x},t)\right] = 0,\label{eq:leaders_hd}
    \end{align}
    \begin{multline}
        \rho_t^F(\mathbf{x},t) + \nabla\cdot\left[\rho^F\left(\mathbf{v}^{FL}(\mathbf{x},t) + \mathbf{v}^{FF}(\mathbf{x},t)\right)\right] = \\ D \nabla^2\rho^F(\mathbf{x},t),\label{eq:followers_hd}
    \end{multline}
\end{subequations}
where $\mathbf{x}\in\Omega$, $\nabla\cdot$ and $\nabla^2$ represent the divergence and Laplacian operators, $\mathbf{u}$ is a control input to design. Furthermore
\begin{subequations}
    \begin{align}
        \mathbf{v}^{FL} &= (\mathbf{f}^{FL}*\rho^L)(\mathbf{x}, t),\\
        \mathbf{v}^{FF} &= (\mathbf{f}^{FF}*\rho^F)(\mathbf{x}, t),
    \end{align}
\end{subequations}
are velocity fields characterizing the followers advection ($\mathbf{f}^{FL}$ and $\mathbf{f}^{FF}$ are the $d$-dimensional version of the periodic interaction kernels $f^{FL}$ and $f^{FF}$\footnote{For $d = 2, 3$ a closed form for such periodic kernels is not available; hence, they need to be expressed through an infinite series then truncated for implementation purposes -- see \cite{boldini2024stigmergy} for more details.}). The convolution is to be interpreted as the $d$-dimensional circular convolution. As in the one-dimensional case, the system is complemented with initial conditions and periodic boundary conditions for the densities on $\partial\Omega$.

\subsection{Feasibility analysis} \label{sec:feasibility_hd}
In higher-dimensions, the feasibility analysis (see Definition \ref{def:feasibility}) starts by considering \eqref{eq:followers_hd} at steady-state, that is
\begin{align}\label{eq:ss_hd}
    \nabla\cdot\left[\bar{\rho}^F(\mathbf{x})\left(\bar{\mathbf{v}}^{FL}(\mathbf{x})+\bar{\mathbf{v}}^{FF}(\mathbf{x})\right)\right] = D\nabla^2\bar{\rho}^F(\mathbf{x}),
\end{align}
where $\bar{\mathbf{v}}^{FL} = (\mathbf{f}^{FL}*\bar{\rho}^L)$ and $\bar{\mathbf{v}}^{FF} = (\mathbf{f}^{FF}*\bar{\rho}^F)$. As \eqref{eq:ss_hd} is only a scalar relation, it does not suffice to uniquely determine $\mathbf{v}^{FL}$. Hence, following the approach in \cite{maffettone2025leader} we pose $\mathbf{w} = \bar{\rho}^F (\bar{\mathbf{v}}^{FL} + \bar{\mathbf{v}}^{FF})$ and close the problem with an irrotationality constraint, that is
\begin{align}\label{eq:flux_relation}
    \begin{cases}
        \nabla \cdot \mathbf{w}(\mathbf{x}) = D\nabla^2\bar{\rho}^F(\mathbf{x}),\\
        \nabla\times\mathbf{w}(\mathbf{x}) = 0,
    \end{cases}
\end{align}
with periodic boundary conditions. As $\Omega$ is simply connected, \eqref{eq:flux_relation} can be solved by introducing a scalar potential such that $\mathbf{w} = -\nabla\varphi$. Substituting the scalar potential into \eqref{eq:flux_relation} yields
\begin{align}
    \nabla^2\varphi(\mathbf{x}) = -D\nabla^2\bar{\rho}^F(\mathbf{x}),
\end{align}
which is fulfilled if 
\begin{align} \label{eq:v_FL_hd}
    \bar{\mathbf{v}}^{FL}(\mathbf{x}) = D\frac{\nabla\bar{\rho}^F(\mathbf{x})}{\bar{\rho}^F(\mathbf{x})} - \bar{\mathbf{v}}^{FF}(\mathbf{x}).
\end{align}
Note that $\bar{\mathbf{v}}^{FL}$ naturally resembles its one-dimensional counterpart in \eqref{eq:vfl_bar}.

Recalling $\bar{\mathbf{v}}^{FL} = (\mathbf{f}^{FL}*\bar{\rho}^L)$, we recover $\bar{\rho}^L$ via deconvolution, that is
\begin{align}\label{eq:deconv_hd}
    \bar{\rho}^L(\mathbf{x}) = R(\mathbf{x}) + A,
\end{align}
where $A$ is an arbitrary constant. Since, in higher-dimensions, a closed form for the repulsive periodic kernel $\mathbf{f}^{FL}$ was not found, $R$ needs to be computed numerically \cite{di2025continuification}. The deconvolution is defined up to a constant because of the linearity of the convolution operator and since interaction kernels are assumed to be odd-periodic on $\Omega$. The feasibility problem is recast into that of seeking $A$ such that 
\begin{subequations}
\begin{align}
    \bar{\rho}^L(\mathbf{x}) &\geq 0,\\
    \int_\Omega\bar{\rho}^L(\mathbf{x})\,\mathrm{d}\mathbf{x} &= M^L.
\end{align}
\end{subequations}
Such constraints can be numerically evaluated to check if feasibility is guaranteed. For instance, setting $A = -\min_\mathbf{x} R(\mathbf{x})$ in \eqref{eq:deconv_hd}, returns the non negative $\bar{\rho}^L$ with the smallest possible integral; if such an integral is less or equal then $1-M^F$, then the problem is feasible.

\subsection{Control design}
Here we extend the control design procedure described in Section \ref{sec:control_design} to higher dimensional domains.
\subsubsection{Leaders' control}
To steer the leaders' density towards the desired distribution $\bar{\rho}^L$, we choose
\begin{align}
    \nabla \cdot [\rho^L(\mathbf{x},t)\mathbf{u}(\mathbf{x},t)] = -K e^L(\mathbf{x},t),
\end{align}
with $K>0$, to ensure
\begin{align}
    e^L_t(\mathbf{x},t) = -K e^L(\mathbf{x},t),
\end{align}
establishing exponential point-wise convergence of the error to 0. To obtain $\mathbf{u}$ explicitly, we add the irrotationality condition
\begin{align}
    \nabla \times [\rho^L(\mathbf{x},t)\mathbf{u}(\mathbf{x},t)] = 0,
\end{align}
and solve the associated Poisson problem using Fourier series, as in \cite{maffettone2024mixed}. This control is spatially periodic and ensures conservation of mass (cfr. \cite{maffettone2025leader}, Section X.A). Under this control law, as in the one dimensional case, the leaders' density satisfies
\begin{align}
    \rho^L(\mathbf{x},t) = \bar{\rho}^L(\mathbf{x},t) + \Phi(\mathbf{x},t),
\end{align}
with
\begin{align}
    \Phi(\mathbf{x},t) = -[\bar{\rho}^L(\mathbf{x}) - \rho_0^L(\mathbf{x})] \mathrm{e}^{-Kt}.
\end{align}

\subsubsection{Followers' stability analysis}
Considering the followers' error defined in \eqref{eq:error_F}, its dynamics satisfies
\begin{multline} \label{eq:err_dyn_multidim}
    e^F_t(\mathbf{x},t) = D \nabla^2 e^F(\mathbf{x},t) - D(1-\mathrm{e}^{-Kt}) \nabla \cdot \left[e^F \frac{\nabla \bar{\rho}^F(\mathbf{x})}{\bar{\rho}^F(\mathbf{x})}\right] \\ + \mathrm{e}^{-Kt} \nabla \cdot[(\bar{\rho}^F(\mathbf{x}) - e^F(\mathbf{x},t))((\mathbf{f}^{FF}*\bar{\rho}^F)(\mathbf{x}) + (\mathbf{f}^{FL}*\rho_0^L)(\mathbf{x})) \\ - D \nabla \bar{\rho}^F(\mathbf{x})]  - \nabla \cdot [(\bar{\rho}^F(\mathbf{x}) - e^F(\mathbf{x},t) (\mathbf{f}^{FF}*e^F)(\mathbf{x},t) ]
\end{multline}
\begin{theorem}\label{th:foll_stab_hd}
    Let the control problem be feasible according to Definition \ref{def:feasibility}, and define 
    \begin{align}
        G_1(\mathbf{x}) = \nabla \cdot \left[\frac{\nabla \bar{ \rho}^F}{\bar{\rho}^F}\right].
    \end{align}
    If the diffusion coefficient and interaction parameters satisfy
    \begin{align}
        D(2- \| G_1(\cdot)\|_\infty)>F,
    \end{align}
    where 
    \begin{align} \label{eq:F_multidim}
        F = 2\left(\| \bar{\rho}^F(\cdot) \|_2 \| \nabla \cdot \mathbf{f}^{FF}(\cdot)\|_2 + \| \nabla \bar{\rho}^F(\cdot) \|_2 \|\mathbf{f}^{FF}(\cdot)\|_2\right),
    \end{align}
    then the error dynamics \eqref{eq:err_dyn_multidim} converges locally to zero in $\mathcal{L}^2(\Omega)$.
\end{theorem}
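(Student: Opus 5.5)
The plan is to mirror the proof of Theorem~\ref{thm:follower_stability}, replacing one-dimensional derivatives by divergence and gradient operators and integration by parts by the divergence theorem on the periodic domain $\Omega$ (boundary terms vanish by periodicity).

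\textbf{Step 1: Lyapunov functional.} Take $V(t)=\Vert e^F(\cdot,t)\Vert_2^2$ and compute $V_t = 2\int_\Omega e^F e^F_t\,\mathrm{d}\mathbf{x}$ using \eqref{eq:err_dyn_multidim}. The four contributions are handled as follows.

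\emph{Diffusion.} It gives $-2D\Vert\nabla e^F\Vert_2^2$.

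\emph{Drift term with coefficient $(1-\mathrm{e}^{-Kt})$.} Integrate by parts and use $e^F\nabla e^F=\tfrac12\nabla (e^F)^2$. This produces a term of the form $\pm D(1-\mathrm{e}^{-Kt})\int (e^F)^2 G_1\,\mathrm{d}\mathbf{x}$, bounded by $D\Vert G_1\Vert_\infty V$.

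\emph{Combining with diffusion.} To turn the diffusion contribution into $-2DV$, as in the 1D argument borrowed from \cite{maffettone2025leader}, apply the Poincaré/Wirtinger inequality on $[-\pi,\pi]^d$. Since $e^F$ has zero mean by mass conservation, $\Vert\nabla e^F\Vert_2^2\ge \Vert e^F\Vert_2^2$.

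\emph{Terms multiplied by $\mathrm{e}^{-Kt}$.} Expand the divergence and bound as in \eqref{eq:replyVdot_bound} by $\Vert h_1\Vert_\infty\mathrm{e}^{-Kt}V + 2\Vert h_2\Vert_2\mathrm{e}^{-Kt}\sqrt V$. Here $h_1=\nabla\cdot(\mathbf{f}^{FL}*\rho^L_0+\mathbf{f}^{FF}*\bar\rho^F)$ and $h_2=\nabla\cdot[\bar\rho^F(\cdots)-D\nabla\bar\rho^F]$, which are bounded given $\mathcal{H}^2$ data.

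\textbf{Step 2: inter-follower nonlinearity.} This is the term $2\int e^F\nabla\cdot[(e^F-\bar\rho^F)(\mathbf{f}^{FF}*e^F)]$, which I treat as in \eqref{eq:extra_terms}.

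\emph{Quadratic-in-$e^F$ part.} Integrating by parts gives $-\int \tfrac12\nabla(e^F)^2\cdot(\mathbf{f}^{FF}*e^F)$. A second integration by parts turns this into $\tfrac12\int (e^F)^2\,((\nabla\cdot\mathbf{f}^{FF})*e^F)$, using that divergence commutes with convolution.

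\emph{$\bar\rho^F$ part.} Expand it as $\int e^F\nabla\bar\rho^F\cdot(\mathbf{f}^{FF}*e^F) + \int e^F\bar\rho^F((\nabla\cdot\mathbf{f}^{FF})*e^F)$.

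\emph{Bounds.} Apply Hölder and Young's convolution inequality exactly as in \eqref{eq:bounds}. This yields the cubic bound $\tfrac12\Vert\nabla\cdot\mathbf{f}^{FF}\Vert_2 V^{3/2}$ and quadratic terms summing to at most $F\,V$ with $F$ from \eqref{eq:F_multidim}.

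\textbf{Step 3: comparison.} Collecting terms gives
\begin{align*}
V_t\le(-2D+D\Vert G_1\Vert_\infty+F)V+\Vert h_1\Vert_\infty\mathrm{e}^{-Kt}V+2\Vert h_2\Vert_2\mathrm{e}^{-Kt}\sqrt V+\Vert\nabla\cdot\mathbf{f}^{FF}\Vert_2V^{3/2}.
\end{align*}
Under the hypothesis $D(2-\Vert G_1\Vert_\infty)>F$, this has the form of \eqref{eq:eta_t} with $\alpha=D(2-\Vert G_1\Vert_\infty)-F>0$. Lemma~\ref{lemma:stability_nonlinear} and the comparison principle then give local convergence of $V$ to $0$, with basin estimated as in Remark~\ref{rem:BA}.

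\textbf{Main obstacle.} I expect the main obstacle to be the pointwise Hölder/Young estimates in $d$ dimensions. Steps like $\Vert (e^F)^2(g*e^F)\Vert_1\le\Vert g\Vert_2\Vert e^F\Vert_2^3$ actually require an $L^\infty$ bound on $g*e^F$, namely $\Vert g*e^F\Vert_\infty\le\Vert g\Vert_2\Vert e^F\Vert_2$, which holds by Cauchy–Schwarz in any dimension. I would verify each bound through this route. A second point to check is that $\nabla\cdot\mathbf{f}^{FF}\in\mathcal{L}^2(\Omega)$ for the periodized kernels, which is needed for $F$ to be finite.
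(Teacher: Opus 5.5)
Your proposal is correct and follows essentially the same route as the paper: the Lyapunov functional $\Vert e^F\Vert_2^2$, bounds on the diffusion, $(1-\mathrm{e}^{-Kt})$-drift and $\mathrm{e}^{-Kt}$ terms (which the paper imports from the earlier work), the rewriting of the inter-follower term as in \eqref{eq:extra_terms} with H\"older/Young bounds producing $F$ and the cubic term, and finally Lemma~\ref{lemma:stability_nonlinear} combined with the comparison principle. The details you spell out — the Poincar\'e--Wirtinger step using the zero mean of $e^F$, and the estimate $\Vert g*e^F\Vert_\infty\le\Vert g\Vert_2\Vert e^F\Vert_2$ behind the Young step — are exactly what the paper's proof leaves implicit.
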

\begin{proof}
    Choosing $\|e^F\|_2^2$ as a Lyapunov functional for \eqref{eq:err_dyn_multidim}, we obtain
    \begin{multline} \label{eq:err_dyn_multidim_2}
        \left(\|e^F(\cdot,t)\|_2^2\right)_t = 2D \int_\Omega e^F(\mathbf{x},t) \nabla ^2e^F(\mathbf{x},t) \, \mathrm{d}\mathbf{x} \\ - 2D(1 - \mathrm{e}^{-Kt}) \int_\Omega e^F(\mathbf{x},t) \nabla \cdot\left[ e^F(\mathbf{x},t) \frac{\nabla \bar{\rho}^F(\mathbf{x})}{\bar{\rho}^F(\mathbf{x})} \right] \, \mathrm{d}\mathbf{x} \\ - 2 \mathrm{e}^{-Kt} \int_\Omega e^F(\mathbf{x},t) \nabla \cdot [D \nabla \bar{\rho}^F \\ -\left(\bar{\rho}^F(\mathbf{x})-e^F(\mathbf{x},t)\right) \left( (\mathbf{f}^{FF}*\bar{\rho}^F)(\mathbf{x}) + (\mathbf{f}^{FL}*\rho_0^L)(\mathbf{x})\right) ] \, \mathrm{d} \mathbf{x} \\ -2 \int_\Omega  e^F(\mathbf{x},t) \nabla \cdot \left[ (\bar{\rho}^F\left(\mathbf{x}) - e^F(\mathbf{x},t)\right)(\mathbf{f}^{FF}*e^F)(\mathbf{x},t) \right] \, \mathrm{d} \mathbf{x} 
    \end{multline}
    Up to the last integral on the right-hand side, \eqref{eq:err_dyn_multidim_2} is equivalent to Equation (117) of \cite{maffettone2025leader} (upon setting $\alpha=0$). Hence, utilizing vectorial identities and the divergence theorem, we can establish the bounds
    \begin{multline} \label{eq:err_bounds_1}
        \left(\|e^F(\cdot,t)\|_2^2\right)_t \leq (-2D + D\| G_1(\cdot) \|_\infty)\| e^F(\cdot,t) \|_2^2 \\ + \| H_1(\cdot) \|_\infty \mathrm{e}^{-Kt} \| e^F(\cdot,t) \|_2^2 + 2 \| H_2(\cdot) \|_2 e^{-Kt} \| e^F(\cdot,t) \|_2 \\ + \int_\Omega e^F(\mathbf{x},t) \nabla \cdot \left[  (e^F(\mathbf{x},t) - \bar{\rho}^F(\mathbf{x}))(\mathbf{f}^{FF}*e^F)(\mathbf{x},t)\right] \, \mathrm{d} \mathbf{x},
    \end{multline}
    where $H_1 = \nabla \cdot \left[(\mathbf{f}^{FF}*\bar{\rho}^F) + (\mathbf{f}^{FL}*\rho_0^L)\right]$ and $H_2 = \nabla \cdot \left[\bar{\rho}^F\left((\mathbf{f}^{FF}*\bar{\rho}^F) + (\mathbf{f}^{FL}*\rho_0^L)\right) - D \nabla \bar{\rho}^F\right]$. The last integral in \eqref{eq:err_bounds_1} can be rewritten as in \eqref{eq:extra_terms}, and exploiting bounds similar to those derived in \eqref{eq:bounds}, we establish
    \begin{multline}
        \left(\|e^F(\cdot,t)\|_2^2\right)_t \leq (-2D +D\| G_1(\cdot) \|_\infty + F) \| e^F(\cdot,t)\|_2^2 \\ + \|H_1(\cdot)\|_\infty \mathrm{e}^{-Kt} \| e^F(\cdot,t) \|_2^2 + 2\|H_2(\cdot)\|_2 \mathrm{e}^{-Kt} \| e^F(\cdot,t) \|_2 \\ + \| \nabla \cdot \mathbf{f}^{FF}(\cdot) \|_2 \| e^F(\cdot,t) \|_2^3,
    \end{multline}
    with $F$ defined in \eqref{eq:F_multidim}. The bounding system is in the form discussed in Lemma \ref{lemma:stability_nonlinear}, proving the theorem.
\end{proof}

\subsection{Numerical validation of the macroscopic controller} \label{sec:macroscopic_validation_hd}

\begin{figure*}
    \centering
    \begin{subfigure}{0.3\textwidth}
        \centering
        \includegraphics[width=\textwidth]{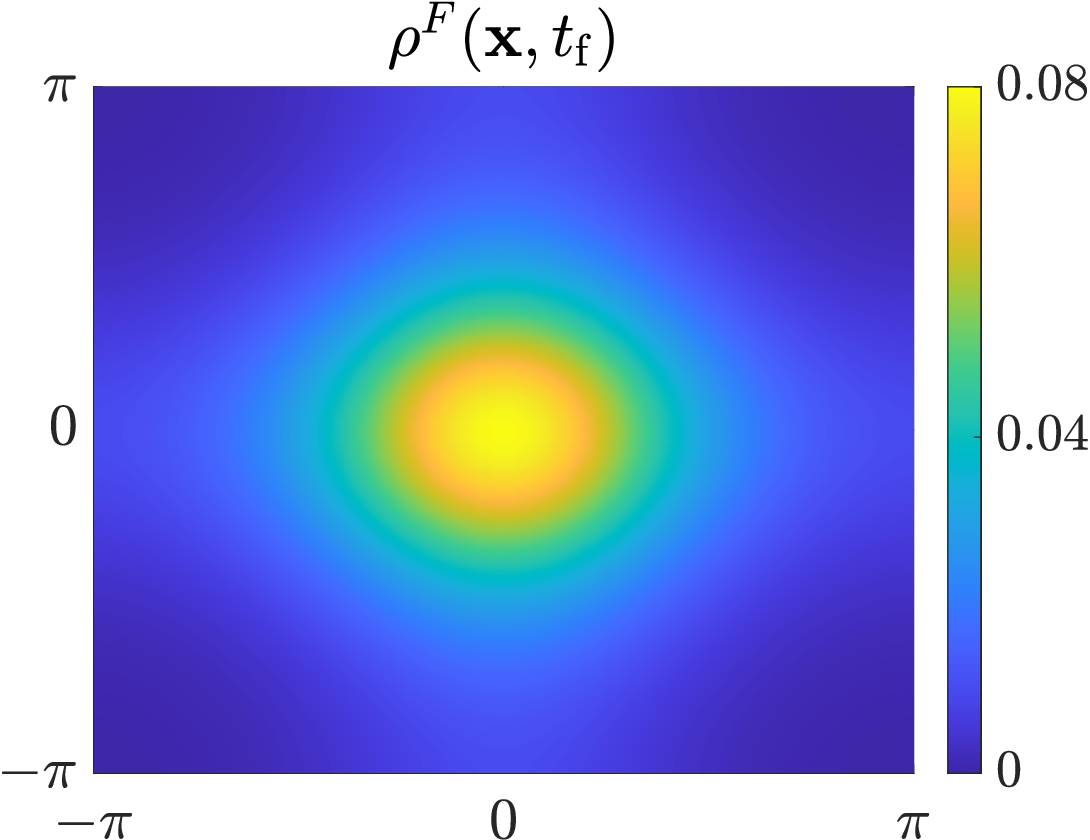}
        \caption{}
        \label{fig:initial_densitites_hd}
    \end{subfigure}
    % \hspace{0.1\textwidth}
    \hfill
    \begin{subfigure}{0.3\textwidth}
        \centering
        \includegraphics[width=\textwidth]{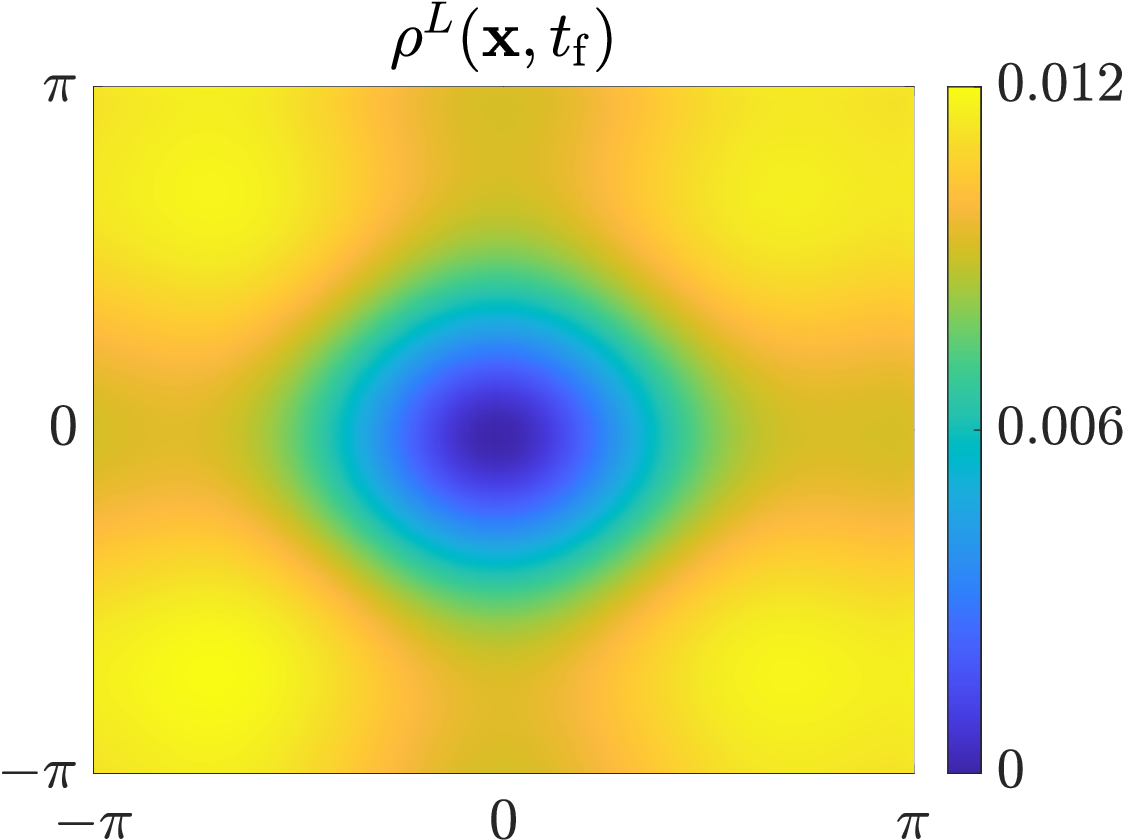}
        \caption{}
        \label{fig:final_densities_hd}
    \end{subfigure}
    \hfill
    \begin{subfigure}{0.3\textwidth}
        \centering
        \includegraphics[width=\textwidth]{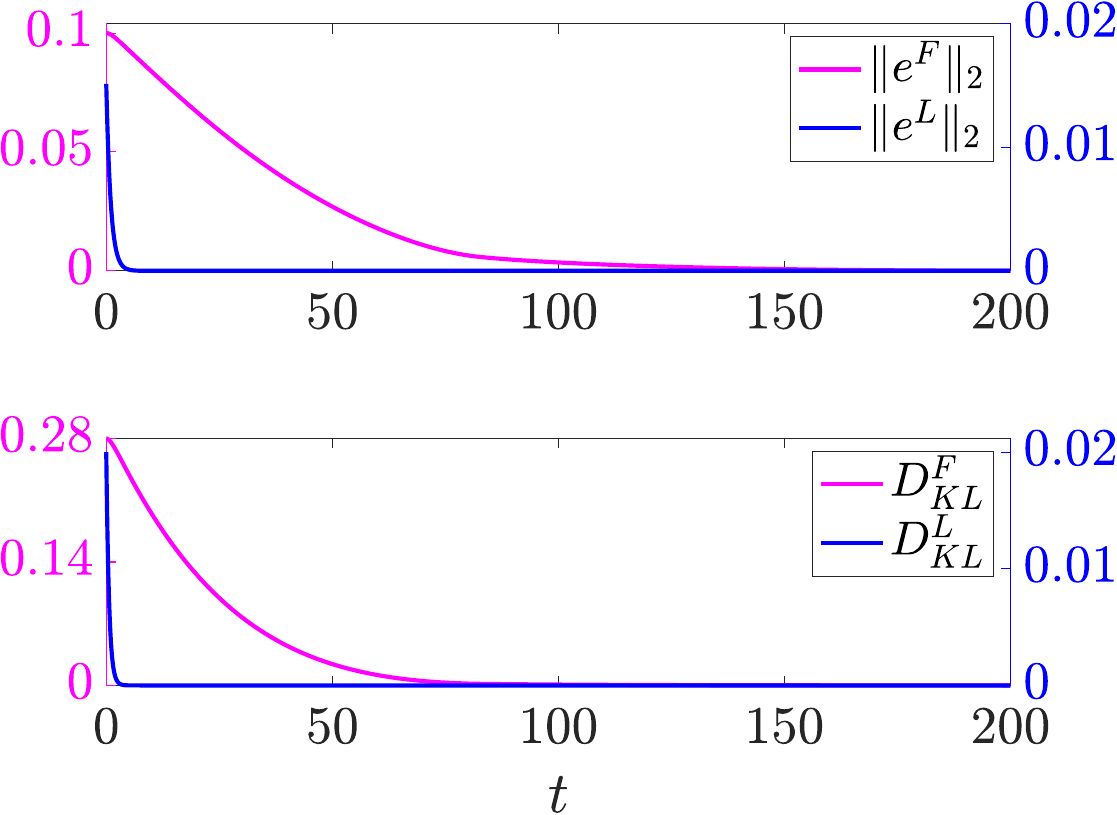}
        \caption{}
        \label{fig:errors_hd}
    \end{subfigure}
    %\hspace{0.1\textwidth}
    \caption{Validation of the higher dimensional strategy: (a) followers' density at the end of the trial, (b) leaders' density at the end of the trial, (c) time evolution of the percentage errors (top panel) and of the KL divergences (bottom panel).}
    \label{fig:validation_2d}
\end{figure*}

In this Section we validate the higher dimensional extension considering the two-dimensional domain $\Omega = [-\pi,\pi]^2$ and choosing as reference density the bimodal von Mises distribution
\begin{align} \label{eq:von_mises_hd}
    \bar{\rho}^F(\mathbf{x}) = Z \exp\{\boldsymbol{\kappa}^\top \mathbf{c}_1(\mathbf{x},\mu,\nu) + \mathbf{c}_2(\mathbf{x},\mu,\nu)\mathbf{I}_2 \mathbf{c}_2(\mathbf{x},\mu,\nu)^\top \},
\end{align}
where $\boldsymbol{\kappa}=[\kappa_1,\kappa_2]^\top$ is the vector of concentration coefficients, $\mu$,$\nu\in\Omega$ are the means along each directions, $\mathbf{c}_1=[\cos(x_1-\mu),\cos(x_2-\nu)]$, $\mathbf{c}_2=[\cos(x_1-\mu),\sin(x_2-\nu)]$, where $x_1$,$x_2$ are the components of $\mathbf{x}$ in the Cartesian coordinate system, $\mathbf{I}_2$ is the second order identity matrix and $Z$ is a normalization coefficient to allow $\bar{\rho}^F$ to sum to the total mass of followers. Also in this case, we approximate spatial derivatives with a central difference scheme over a $50\times50$ mesh-grid, and employ a forward Euler scheme for time integration, with time step $\Delta t=0.01$ over $200$ time units. We consider followers' reference density to have equal concentration coefficients and means along the directions, specifically $\kappa_1=\kappa_2 = 1$ and $\mu=\nu=0$. The follower-follower interactions are modeled considering the morse interaction kernel in \eqref{eq:ff_kernel}. Specifically, we recover the high-dimensional periodic kernel from its non-periodic counterpart 
\begin{align} \label{eq:kernel_hd}
    \hat{\mathbf{f}}^{FF}_i(\mathbf{x}) = \begin{cases}
        \frac{\mathbf{x}}{\|\mathbf{x}\|_2} \mathrm{e}^{-\frac{\mathbf{x}}{\ell_i}} \quad \mathrm{if} \,  &\|\mathbf{x}\|_2 \neq 0 \\
        0 \quad &\mathrm{otherwise}
    \end{cases} \, ,
\end{align}
where we set $\ell_r=\pi/2$, $\ell_a = \pi$ and $\zeta=1$; the length of the repulsive leader-follower interaction is set to $\ell = \pi$. We choose $D=0.01$, $M^F=0.4$ and $M^L=0.6$ to satisfy feasibility constraints. Results are shown in Fig. \ref{fig:validation_2d}, in which leaders and followers' densities start from uniform initial distributions and effectively converge to the reference.

To quantify performance, we consider density errors $\| e^i \|_2$, $i=L,F$, and KL divergences defined in \eqref{eq:KL_divergence}. Fig. \ref{fig:validation_2d} shows the simulation trial. Panels (a) and (b) show followers' and leaders' final densities respectively, while in panel (c) we report the time evolution of the percentage errors and the KL divergences.
}

\section{Deployment on Finite Multi-Agent Systems}
\label{sec:deployment}

Our framework exploits a macroscopic agents' description to perform control design at the density level.
To validate its applicability in real-world settings where the assumption of swarms of infinite size is unrealistic, we adopt a multi-scale control architecture grounded in the continuification paradigm \cite{maffettone2022continuification, nikitin2021continuation}: a ``macro-to-micro bridge'' discretizes the control field $\B{u}(\B{x},t)$ to individual leader commands via spatial collocation, while a ``micro-to-macro bridge'' reconstructs densities from agent positions through kernel estimation (see Fig. \ref{fig:multiscale} for a block diagram).

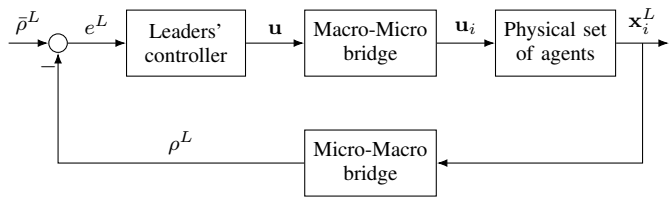
\begin{figure}
\centering
\begin{tikzpicture}[
    font=\footnotesize,
    x=1pt,
    y=1pt,
    block/.style={
        draw,
        rectangle,
        minimum width=45pt,
        minimum height=25pt,
        align=center
    },
    sum/.style={
        draw,
        circle,
        minimum size=7pt,
        inner sep=0pt
    },
    arrow/.style={
        -{Latex[length=4pt]},
        line width=0.4pt
    },
    node distance=22pt
]

% Nodes
\node[sum] (sum) {};
\node[block, right=of sum] (controller) {Leaders'\\controller};
\node[block, right=of controller] (discretization) {Macro-Micro\\bridge};
\node[block, right=of discretization] (agents) {Physical set\\of agents};
\node[block, below=20pt of discretization] (estimation) {Micro-Macro\\bridge};

% Input reference
\draw[arrow] ([xshift=-15pt]sum.west) -- (sum.west)
    node[midway, above] {$\bar{\rho}^L$};

% Sum sign
\node at ([yshift=-9pt]sum.west) {$-$};

% Forward path
\draw[arrow] (sum) -- (controller)
    node[midway, above] {$e^L$};

\draw[arrow] (controller) -- (discretization)
    node[midway, above] {$\mathbf{u}$};

\draw[arrow] (discretization) -- (agents)
    node[midway, above] {$\mathbf{u}_i$};

\draw[arrow] (agents) -- ([xshift=20pt]agents.east)
    node[midway, above] {$\mathbf{x}_i^L$};

% Output to estimation
\draw[arrow] ([xshift=10pt]agents.east) |- (estimation);

% Feedback
\draw[arrow] (estimation.west) -| (sum.south)
    node[pos=0.25, above] {$\rho^L$};

\end{tikzpicture}
\caption{Multiscale control scheme.}
\label{fig:multiscale}
\end{figure}

\subsection{Microscopic Model}

We consider the discrete counterpart of \eqref{eq:hd_model}, that is
\begin{subequations}
\label{eq:micro_model}
\begin{align}
\dot{\B{x}}_i^L &= \B{u}_i, \quad i = 1, \ldots, N^L \label{eq:micro_leaders} \\
\mathrm{d}\B{x}_k^F &= \frac{1}{N^L + N^F} \Bigg(\sum_{j=1}^{N^L} \B{f}^{FL}(\B{x}_j^L \triangleright \B{x}_k^F) \nonumber \\
&\quad\quad\quad + \sum_{m =1}^{N^F} \B{f}^{FF}(\B{x}_m^F \triangleright \B{x}_k^F) \Bigg)\mathrm{d}t + \sqrt{2D} \, \mathrm{d}\B{B}_k, \label{eq:micro_followers}
\end{align}
\end{subequations}
for $k = 1, \ldots, N^F$, where $\B{x}_i^L, \B{x}_k^F \in \Omega = [-\pi, \pi]^d$ (with $d=1, 2, 3$) denote leader and follower positions, $\B{u}_i$ is the control input for leader $i$, $\B{B}_k$ are independent standard Wiener processes, and $\B{f}^{FL}$, $\B{f}^{FF}$ are the follower-leader and follower-follower interaction kernels. 

\begin{rem}[Consistency with the macroscopic model]
The microscopic dynamics \eqref{eq:micro_model} represent the discrete counterpart of the macroscopic PDEs \eqref{eq:hd_model}. The leader-follower interaction term corresponds to the convolution $\B{v}^{FL} = \B{f}^{FL} * \rho^L$, while the follower-follower term corresponds to $\B{v}^{FF} = \B{f}^{FF} * \rho^F$. The case $d=1$ resembles the macroscopic model in \eqref{eq:themodel}.
\end{rem}

\subsection{One-dimensional Setting} 
We first consider the case $d=1$, for which we can proceed fully analytically as we are able to compute deconvolutions in closed form.

\subsubsection{Discrete Feasibility Analysis}
\label{subsec:discrete_feasibility}
The macroscopic feasibility conditions of Theorem~\ref{th:feasibility_theorem} establish bounds on the leader mass $M^L$ required to achieve a target distribution. In finite populations, these translate to constraints on the number of leaders $N^L$ for a given number of followers $N^F$. Specifically, recalling that $M^L = N^L/(N^L + N^F)$, that is the fraction of leaders in the discrete setting, the left constraint of Theorem~\ref{th:feasibility_theorem} becomes
\begin{equation}
\label{eq:discrete_feasibility}
N^L \geq \frac{\widehat{M}^L_1}{1 - \widehat{M}^L_1} N^F = \widehat{N}^L_1,
\end{equation}
providing an explicit lower bound on the leader count as a function of the follower population size. A similar bound $\widehat{N}^L_2$ can be recovered from $\widehat{M}^L_2$ -- see \eqref{eq:constraints_feasibility}.

\subsubsection{Macro-to-Micro and Micro-to-Macro Bridges} \label{sec:micro_control}
Given the macroscopic control field $u(x,t)$ from \eqref{eq:control_u}, we assign microscopic inputs via spatial collocation, i.e.~by setting:
\begin{equation}
\label{eq:discretization}
u_i(t) = u(x_i^L(t), t), \quad i = 1, \ldots, N^L.
\end{equation}
Such a procedure represents a discretization of the continuum control actions, i.e.~a macro-to-micro bridge (cf Fig.~\ref{fig:multiscale}).

A micro-to-macro bridge is developed to recover macroscopic densities from microscopic quantities. Specifically, the leader density $\rho^L$, required for evaluating $u$, is estimated online from agent positions using kernel density estimation with von Mises kernels.

\subsubsection{Microscopic Numerical Validation}

\begin{figure*}
    \centering
    \begin{subfigure}{0.32\textwidth}
        \centering
        \includegraphics[width=\textwidth]{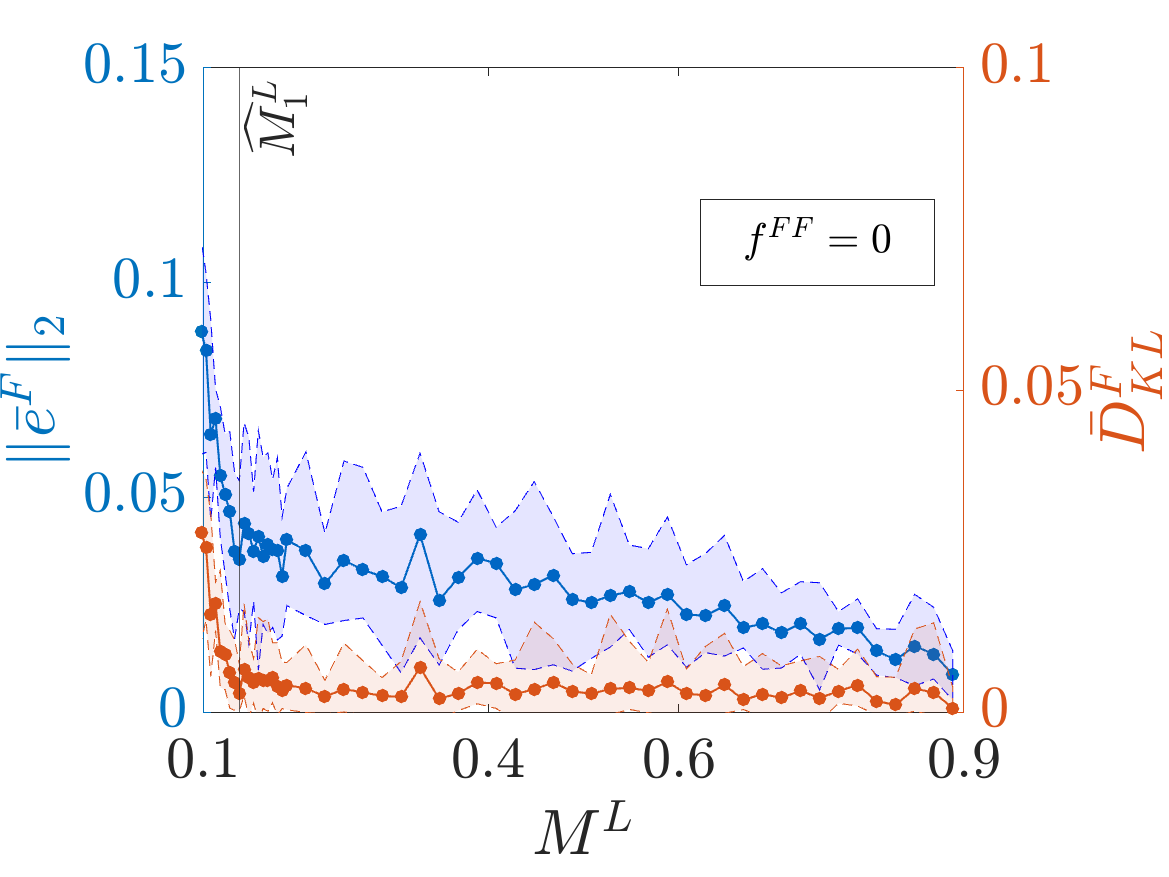}
        \caption{}
        \label{fig:E_DKL_no_interactions_def}
    \end{subfigure}
    \begin{subfigure}{0.32\textwidth}
        \centering
        \includegraphics[width=\textwidth]{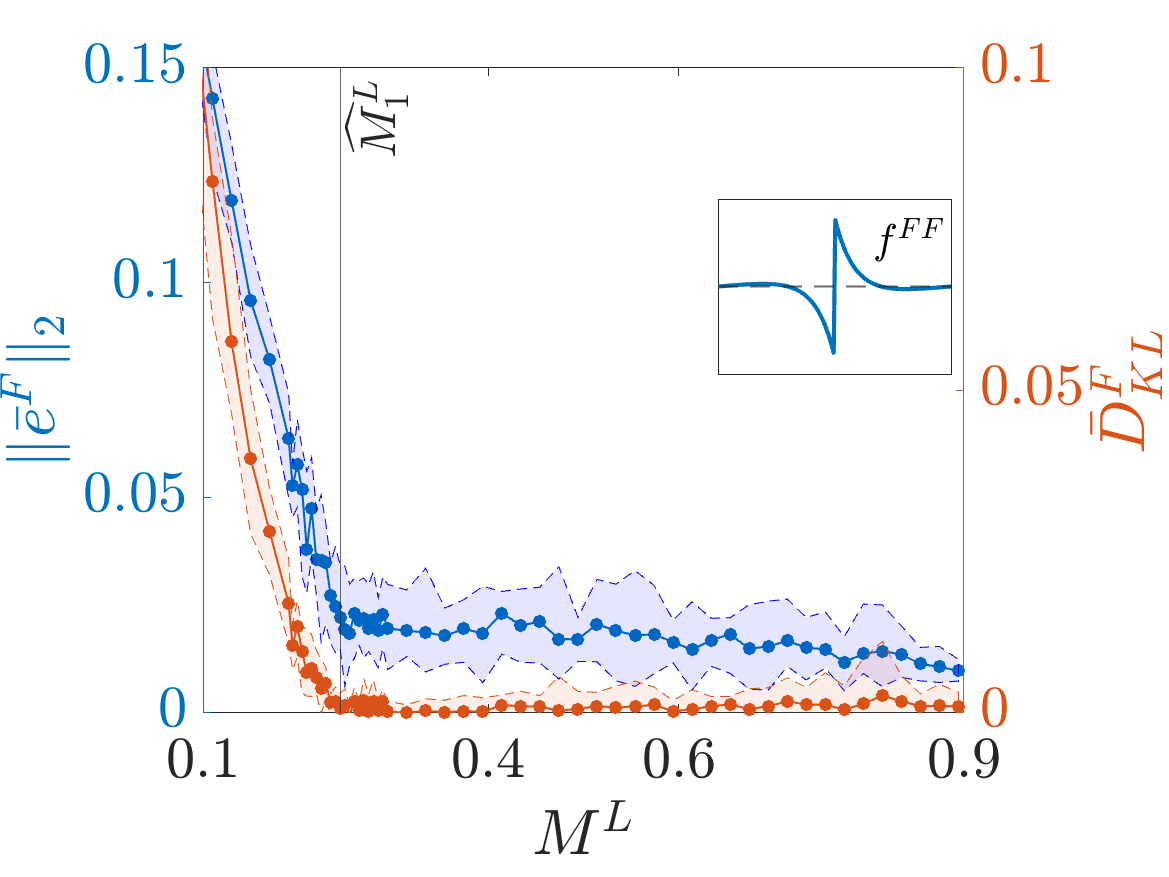}
        \caption{}
        \label{fig:E_DKL_weak_interactions_def}
    \end{subfigure}
    \begin{subfigure}{0.32\textwidth}
        \centering
        \includegraphics[width=\textwidth]{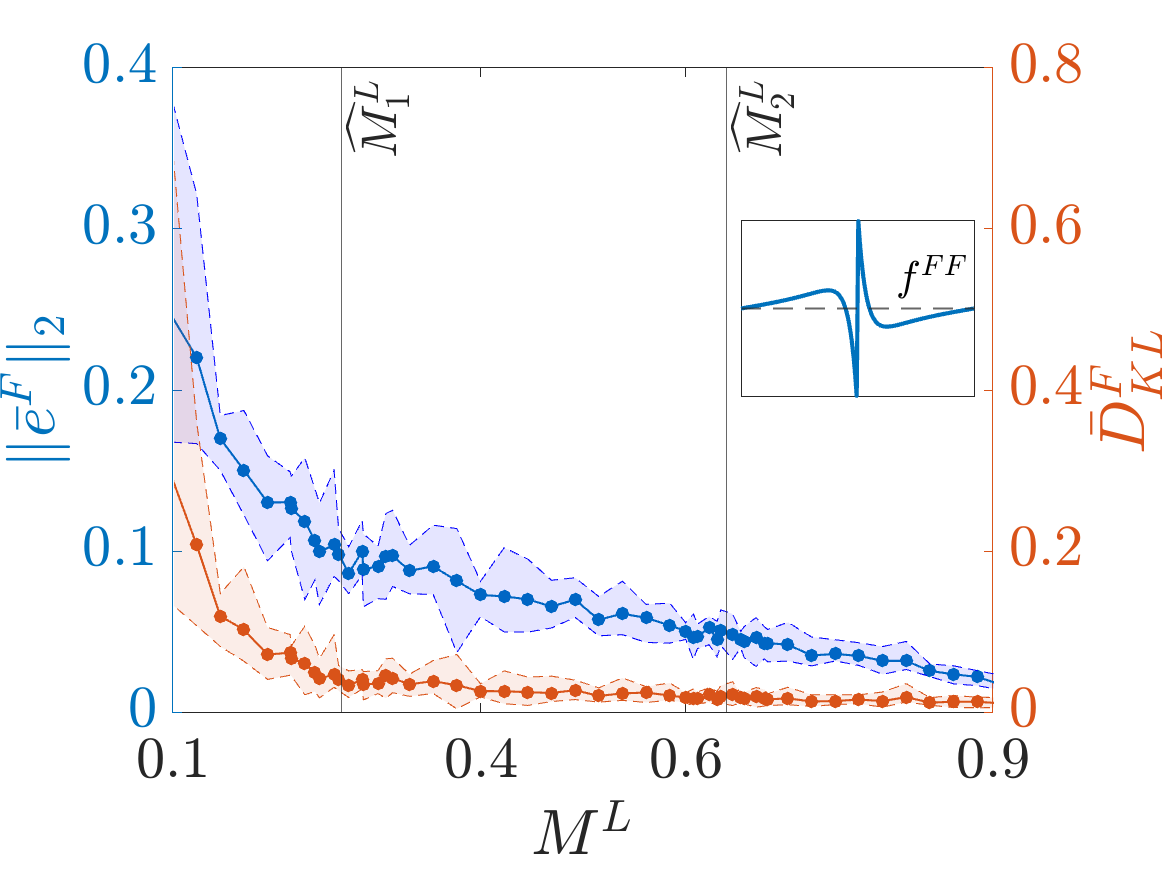}
        \caption{}
        \label{fig:E_DKL_strong_interactions_def}
    \end{subfigure}
    %\hspace{0.1\textwidth}
    \caption{Agent-based deployment: steady-state followers' error and KL divergence as a function of the leaders' mass in (a) absence, (b) weak and (c) strong follower-follower interactions. The shaded areas represent the minimum and maximum values.}
    \label{fig:discrete_feasibility}
\end{figure*}
We propose the agent-based version of the last trial discussed in Section~\ref{sec:numerical_validation} and showed in Fig.~\ref{fig:varying_ml}. Specifically, we conduct repeated simulations across different ranges of $N^L/N^F$ ratios. We keep the total number of agents constant to $N^L+N^F=500$ and vary the percentage of leaders within the range $M^L\in[0.1, 0.9]$. For each configuration we initialize equally spaced agents and control leaders according to \eqref{eq:discretization}. For each value of $M^L$, we  perform 20 simulations for $T=100$ time units with time step $\Delta t = 0.01$, according to the Euler scheme for the leaders and Euler-Maruyama scheme for the followers.
% For each configuration, we initialize $N^F$ followers uniformly in $\Omega$ and deploy $N^L$ leaders according to the discretized control law \eqref{eq:discretization}.
We measure the followers' density error $\| e^F \|_2$ and compute the average of its final value over the 20 simulations, referred to as $\| \bar{e}^F \|_2$ in the plots. Fig.~\ref{fig:discrete_feasibility} shows the results for the von Mises target distribution \eqref{eq:vonMises} with zero mean in the same three representative scenarios as in Section \ref{sec:numerical_validation}
: (a) no follower-follower interactions (other parameters: $\mu=0$, $\kappa = 1$, $D = 0.04$, $K=1$); (b) weak follower-follower Morse interactions, with $\ell_r = \pi/2$, $\ell_a = \pi$ and $\alpha = 1$ (other parameters: $\mu = 0$, $\kappa=1$, $D = 0.02$, $K=1$); (c) strong follower-follower Morse interactions, that is $\ell_r = \pi/15$, $\ell_a = \pi/2$ and $\alpha = 2$ (other parameters: $\mu = 0$, $\kappa = 2$, $D=0.16$,$K=1$). Also in this case we sample more values of $M^L$ close to the theoretical feasibility threshold. 

Comparing Fig.~\ref{fig:varying_ml} and Fig.~\ref{fig:discrete_feasibility}, we make the following observations. The macroscopic guarantees of Theorem~\ref{th:feasibility_theorem} and Theorem~\ref{thm:follower_stability} hold in the limit $N^L, N^F \to \infty$. For finite populations, a residual steady-state error persists due to (i) the finite-size approximation of the continuum and (ii) stochastic fluctuations in the follower dynamics; characterizing this error as a function of population size is discussed as an open problem in Section~\ref{sec:conclusions}. However, in the agent-based setting, the feasibility thresholds $\widehat{M}^L_{1,2}$, although computed within a continuum framework, still capture the regions where the steady-state error is minimized, highlighting the applicability of continuum approaches to real swarm robotics problems. Moreover, Fig.~\ref{fig:E_DKL_strong_interactions_def} shows improved steady-state performance above $\widehat{M}^L_2$. This does not contradict our theoretical findings, as the curves in Fig.~\ref{fig:E_DKL_strong_interactions_def} remain consistently above those in Fig.~\ref{fig:macro_ml_strong_ff}, indicating that finite-size and discretization errors prevent the observation of the performance degradation predicted above $\widehat{M}^L_2$ in the continuum.

The scaling \eqref{eq:discrete_feasibility} provides a quantitative design rule analogous to herdability conditions in shepherding problems (cf.\ Remark~\ref{rem:herdability}): for a given follower population and interaction parameters, the minimum number of leaders required for density control is explicitly determined by the continuum feasibility analysis.

\subsection{Two-dimensional setting}
We now consider $d=2$ and provide a description on how to deploy the macroscopic control strategy to $N$ agents in the periodic square of size $2 \pi$. After choosing the desired followers' density profile $\bar{\rho}^F$, we fix $\bar{\B{v}}^{FL}$ as in \eqref{eq:v_FL_hd} and, since a closed form of $\B{f}^{FL}$ was not found in higher dimensions, we recover $\bar{\rho}^L$ by performing numerical deconvolution as in \eqref{eq:deconv_hd} (see \cite{di2025continuification} for details on the numerical deconvolution). We fix $A=-\min_{\B{x}} \bar{\rho}^L(\B{x})$ in \eqref{eq:deconv_hd} to obtain the non-negative leaders' density with minimum integral and recover the minimum leaders' mass $\widehat{M}^L$ by computing such an integral. 

To validate this procedure, we consider the zero-mean Von-Mises distribution in \eqref{eq:von_mises_hd} with $\kappa_1=\kappa_2=1$, Morse-type follower-follower interactions with $\ell_r=\pi/2$, $\ell_a=\pi$ and $\zeta=1$, repulsive leader-follower interactions with $\ell = \pi$ and $D=0.01$. We deploy $N=1000$ agents, among which $N^F=660$ are followers and $N^L=340$ are leaders, to satisfy the feasibility constraints. We run the simulation over $200$ time units with time step $\Delta t = 0.01$ with the Euler-Maruyama scheme for the followers and Euler scheme for the leaders. Fig. \ref{fig:agent_based_validation2D} shows the initial and final configurations of agents and the evolution of percentage errors and KL divergences. In panel \ref{fig:errors_agents2D}, a small steady-state error is observed, unlike in the macroscopic simulations based on the continuum formulation presented in Section \ref{sec:macroscopic_validation_hd}. This discrepancy is mainly attributable to two factors: the finite size of the swarm, which limits the validity of the continuum assumption, and the stochastic behavior exhibited by the followers.

\begin{figure*}
    \centering
    \begin{subfigure}{0.3\textwidth}
        \centering
        \includegraphics[width=\textwidth]{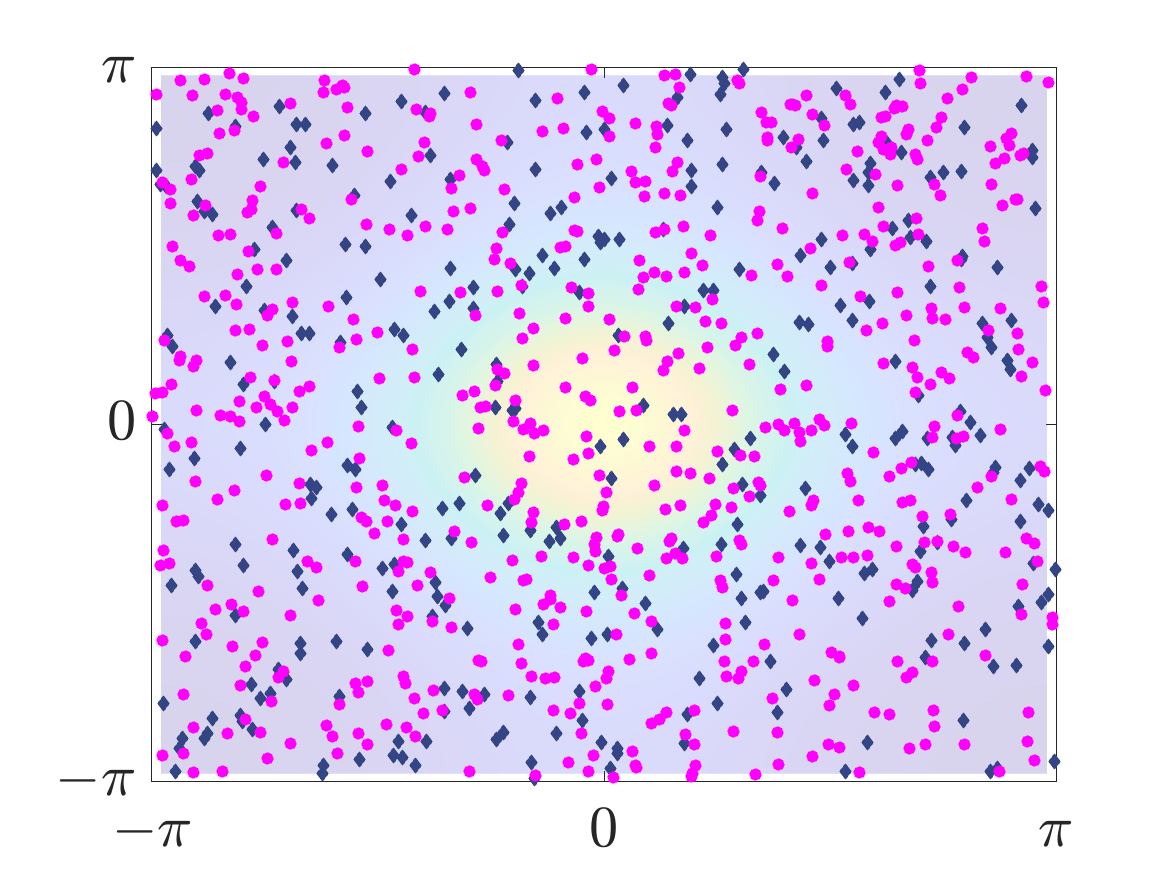}
        \caption{}
        \label{fig:initial_agents2D}
    \end{subfigure}
    % \hspace{0.1\textwidth}
    \hfill
    \begin{subfigure}{0.3\textwidth}
        \centering
        \includegraphics[width=\textwidth]{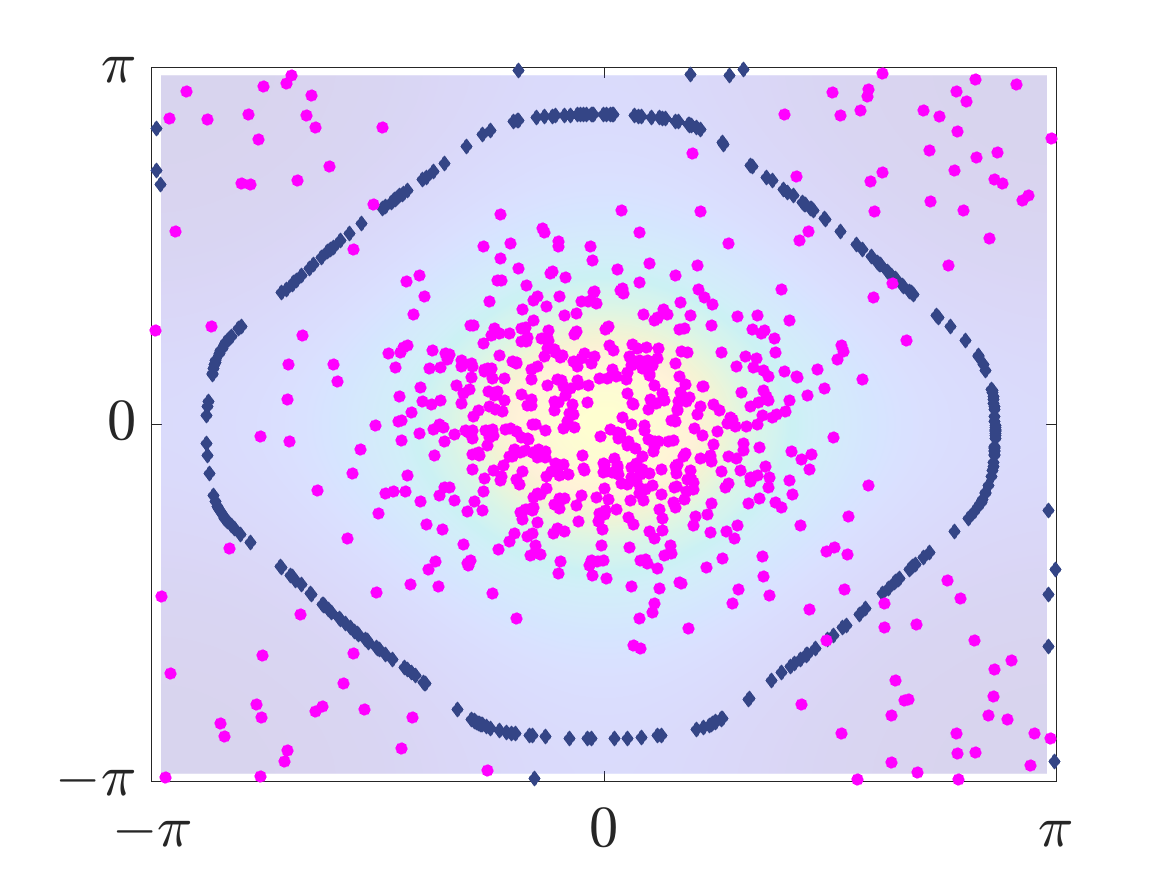}
        \caption{}
        \label{fig:final_agents2D}
    \end{subfigure}
    \hfill
    \begin{subfigure}{0.3\textwidth}
        \centering
        \includegraphics[width=\textwidth]{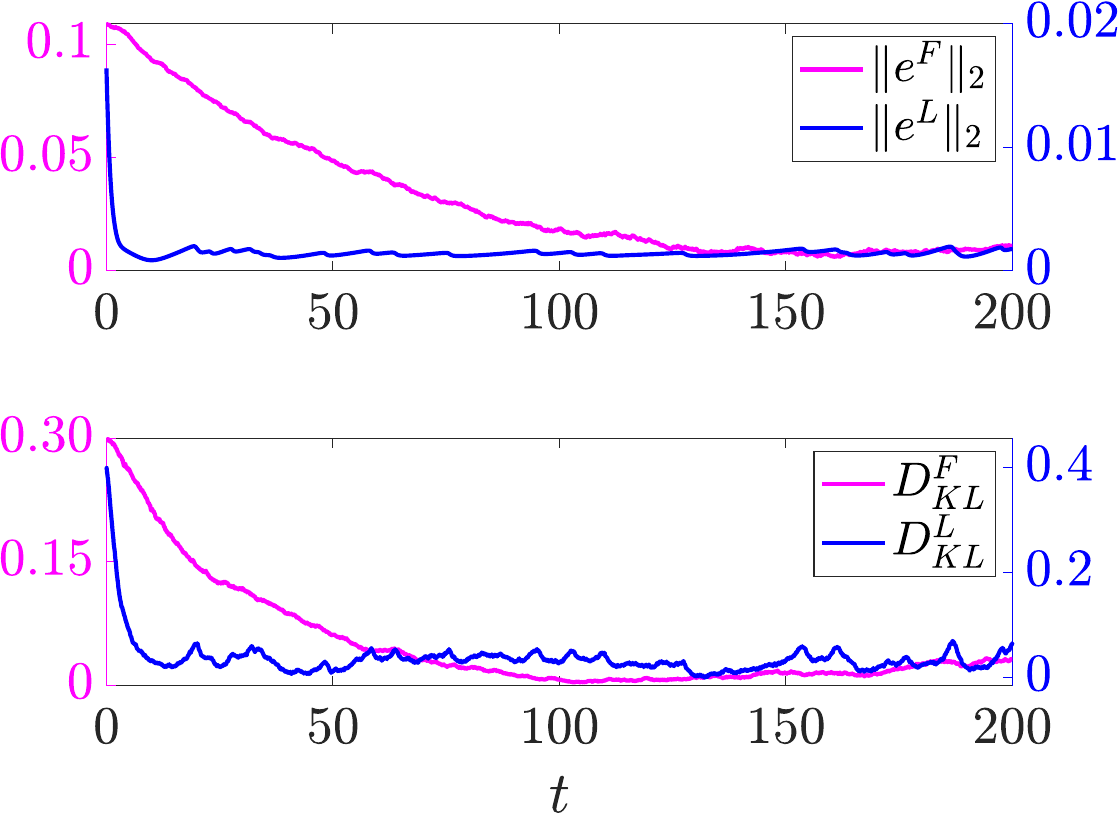}
        \caption{}
        \label{fig:errors_agents2D}
    \end{subfigure}
    %\hspace{0.1\textwidth}
    \caption{Agent-based deployment in a two-dimensional setting (blue diamonds are the leaders, magenta dots are the followers, the shaded background is $\bar{\rho}^F$): (a) initial configuration of agents; (b) final configuration of agents; (c) leaders and followers percentage errors and KL divergences.}
    \label{fig:agent_based_validation2D}
\end{figure*}

\section{Conclusions}\label{sec:conclusions}
In this work, we consider density control problems involving different populations of agents. Specifically, we consider an indirect control problem in which control authority is delegated only to population of leaders, which is tasked of driving a population of followers towards a desired distribution. Differently from our previous work \cite{maffettone2025leader}, we do not neglect inter-follower interactions, developing a generalized set-up in which we derive necessary and sufficient feasibility conditions and local stability of the control scheme.

A key finding is that follower interactions can either facilitate or make more cumbersome the density control task. Specifically, when the desired density is statistically close to the open-loop density displacement of followers, the task may require less leaders to be tackled. This is partly coherent with shepherding problems \cite{strombom2014solving}, where cohesion aids confinement tasks.

Several limitations of the present framework warrant discussion. First, the analysis assumes perfect knowledge of the density field $\rho^L(x,t)$. In practice, this must be estimated from the positions of a finite number of agents, introducing estimation errors whose effect on closed-loop stability is not captured by the current theory. Second, while Theorem~\ref{thm:follower_stability} and \ref{th:foll_stab_hd} guarantee asymptotic convergence, explicit finite-time bounds on the convergence rate are not provided; deriving such bounds, potentially via input-to-state stability arguments, remains an open problem. Third, the model assumes followers respond instantaneously to leaders through the interaction kernel; incorporating delays or second-order dynamics would require substantial extensions. %Finally, the feasibility and stability conditions depend on global properties of the desired target distribution (e.g., $\|g_1\|_\infty$), which may be difficult to verify for complex, multi-modal targets.

This work establishes the theoretical foundations for density control with interacting followers and lays the groundwork for several extensions currently under investigation. On the methodological side, we are exploring the integration of safety constraints through mean-field Control Barrier Functions \cite{fung2025mean, gao2026banach} to guarantee obstacle avoidance at the population level. From a theoretical perspective, the connection between feasibility conditions and herdability warrants further investigation. The minimum leader mass $\widehat{M}^L_1$ derived in Theorem~\ref{th:feasibility_theorem} provides a continuum analogue of herdability conditions in shepherding problems \cite{lama2024shepherding}. Extending this analysis to characterize \emph{herdability in the continuum}, determining which target distributions are reachable from a given initial configuration and with what minimum control authority, would bridge density control theory with the broader shepherding literature and provide principled design guidelines for multi-agent deployments. 

A further open problem concerns the quantitative characterization of finite-size effects. The  macroscopic stability guarantees hold exactly only in the continuum limit  $N^L, N^F \to \infty$; for finite populations, a residual steady-state  error persists due to discretization of the density field and stochastic fluctuations in the follower dynamics. Deriving explicit bounds on this error as a function of $N^L$, $N^F$, the diffusion coefficient $D$, and the interaction kernel parameters would provide rigorous finite-population design guidelines.

From a practical standpoint, experimental validation on robotic platforms 
is underway, leveraging the multi-scale control architecture described in 
Section~\ref{sec:deployment}: macroscopic density specifications are 
translated into individual agent commands via
discretization, and density feedback is reconstructed online from agent 
positions through kernel density estimation. 

Finally, the incorporation of learning-based methods can offer a complementary 
direction for settings where closed-form feasibility conditions are 
intractable; for instance, when interaction kernels $f^{FF}$ are unknown 
and must be identified from trajectory data, or when the target density 
$\bar{\rho}^F$ varies in real time and offline feasibility verification 
is impractical. In such cases, physics-informed neural networks or 
Gaussian process regression could be used to approximate the kernel 
and feasibility boundary online.

\section*{Acknowledgment}
The authors acknowledge Dr. Davide Salzano (University of Naples Federico II) for all the insightful discussion and for his help in developing Appendix A.

\section*{APPENDIX}
\subsection{Estimation of the basin of attraction}\label{app:basin_of_attraction}
The stable invariant manifold $W_s$ divides the state-space of \eqref{eq:aug_sys} in two regions, one of trajectories converging to the origin and one of trajectories such that $\eta\to\infty$ and $\xi\to0$. Hence, being the dynamics of $\xi$ linear and stable, a conservative estimate of the basin of attraction is given by the region below the nullcline $\eta_t = 0$, where $\eta_t$ switches sign. Such a nullcline takes the form
\begin{align}
    \xi = \frac{\alpha\eta-\delta\eta\sqrt{\eta}}{\beta\eta+c\sqrt{\eta}}.
\end{align}
Recalling the constraint $\xi(0) = 1$, we look for intersections of the nullcline with the horizontal axis $\xi = 1$. Doing so using the change of viariables $\tilde{\eta} = \sqrt{\eta}$, we find two possible intersections
\begin{align}
    \eta_{1,2} = \left(\frac{-(\beta-\alpha) \mp \sqrt{(\beta-\alpha)^2 - 4\gamma\delta}}{2\delta}\right)^2.
\end{align}
The greatest among the two intersections, say $\eta_2$, if exists, defines the basin of attraction as
\begin{align}
    \mathcal{B}(\mathbf{0}) = \{(\eta, \xi):\xi=1, \eta < \eta_2\}.
\end{align}
We refer to Fig. \ref{fig:ras_estimation} for a graphical interpretation, where we both consider the case $\eta_2$ exists or not.

%%%%%%%%%%%%%%%%%%%%%%%%%%%%%%%%%%%%%%%%%%%%%%%%%%%%%%%%%%%%%%%%%%%%%%%%%%%%%%%%
\section*{References}
\bibliographystyle{IEEEtran}

\begin{thebibliography}{10}
\providecommand{\url}[1]{#1}
\csname url@samestyle\endcsname
\providecommand{\newblock}{\relax}
\providecommand{\bibinfo}[2]{#2}
\providecommand{\BIBentrySTDinterwordspacing}{\spaceskip=0pt\relax}
\providecommand{\BIBentryALTinterwordstretchfactor}{4}
\providecommand{\BIBentryALTinterwordspacing}{\spaceskip=\fontdimen2\font plus
\BIBentryALTinterwordstretchfactor\fontdimen3\font minus
  \fontdimen4\font\relax}
\providecommand{\BIBforeignlanguage}[2]{{%
\expandafter\ifx\csname l@#1\endcsname\relax
\typeout{** WARNING: IEEEtran.bst: No hyphenation pattern has been}%
\typeout{** loaded for the language `#1'. Using the pattern for}%
\typeout{** the default language instead.}%
\else
\language=\csname l@#1\endcsname
\fi
#2}}
\providecommand{\BIBdecl}{\relax}
\BIBdecl

\bibitem{siri2021freeway}
S.~Siri, C.~Pasquale, S.~Sacone, and A.~Ferrara, ``Freeway traffic control: A
  survey,'' \emph{Automatica}, vol. 130, p. 109655, 2021.

\bibitem{papageorgiou2003review}
M.~Papageorgiou, C.~Diakaki, V.~Dinopoulou, A.~Kotsialos, and Y.~Wang, ``Review
  of road traffic control strategies,'' \emph{Proceedings of the IEEE},
  vol.~91, no.~12, pp. 2043--2067, 2003.

\bibitem{sinigaglia2025robust}
C.~Sinigaglia, A.~Manzoni, F.~Braghin, and S.~Berman, ``Robust optimal density
  control of robotic swarms,'' \emph{Automatica}, vol. 176, p. 112218, 2025.

\bibitem{dorigo2021swarm}
M.~Dorigo, G.~Theraulaz, and V.~Trianni, ``Swarm robotics: Past, present, and
  future [point of view],'' \emph{Proceedings of the IEEE}, vol. 109, no.~7,
  pp. 1152--1165, 2021.

\bibitem{brambilla2013swarm}
M.~Brambilla, E.~Ferrante, M.~Birattari, and M.~Dorigo, ``Swarm robotics: a
  review from the swarm engineering perspective,'' \emph{Swarm Intelligence},
  vol.~7, no.~1, pp. 1--41, 2013.

\bibitem{yuan2023multi}
Z.~Yuan, T.~Zheng, M.~Nayyar, A.~R. Wagner, H.~Lin, and M.~Zhu,
  ``Multi-robot-assisted human crowd control for emergency evacuation: A
  stabilization approach,'' in \emph{2023 American Control Conference (ACC)},
  2023.

\bibitem{helbing2000simulating}
D.~Helbing, I.~Farkas, and T.~Vicsek, ``Simulating dynamical features of escape
  panic,'' \emph{Nature}, vol. 407, no. 6803, pp. 487--490, 2000.

\bibitem{massana2022rectification}
H.~Massana-Cid, C.~Maggi, G.~Frangipane, and R.~Di~Leonardo, ``Rectification
  and confinement of photokinetic bacteria in an optical feedback loop,''
  \emph{Nature Communications}, vol.~13, no.~1, p. 2740, 2022.

\bibitem{giusti2026data}
A.~Giusti, D.~Salzano, M.~di~Bernardo, and T.~E. Gorochowski, ``Data-driven
  inference of digital twins for high-throughput phenotyping of motile and
  light-responsive microorganisms,'' \emph{Journal of The Royal Society
  Interface}, vol.~23, no. 234, 2026.

\bibitem{stern2018dissipation}
R.~E. Stern, S.~Cui, M.~L. Delle~Monache, R.~Bhadani, M.~Bunting, M.~Churchill,
  N.~Hamilton, R.~Haulcy, H.~Pohlmann, F.~Wu \emph{et~al.}, ``Dissipation of
  stop-and-go waves via control of autonomous vehicles: Field experiments,''
  \emph{Transportation research part C: emerging technologies}, vol.~89, pp.
  205--221, 2018.

\bibitem{pierson2017controlling}
A.~Pierson and M.~Schwager, ``Controlling noncooperative herds with robotic
  herders,'' \emph{IEEE Transactions on Robotics}, vol.~34, no.~2, pp.
  517--525, 2017.

\bibitem{almi2023optimal}
S.~Almi, M.~Morandotti, and F.~Solombrino, ``Optimal control problems in
  transport dynamics with additive noise,'' \emph{Journal of Differential
  Equations}, vol. 373, pp. 1--47, 2023.

\bibitem{maffettone2025leader}
G.~C. Maffettone, A.~Boldini, M.~Porfiri, and M.~d. Bernardo,
  ``Leader–follower density control of spatial dynamics in large-scale
  multiagent systems,'' \emph{IEEE Transactions on Automatic Control}, vol.~70,
  no.~10, pp. 6783--6798, 2025.

\bibitem{lama2024shepherding}
A.~Lama and M.~di~Bernardo, ``Shepherding and herdability in complex multiagent
  systems,'' \emph{Physical Review Research}, vol.~6, no.~3, p. L032012, 2024.

\bibitem{di2025continuification}
B.~Di~Lorenzo, G.~C. Maffettone, and M.~di~Bernardo, ``A continuification-based
  control solution for large-scale shepherding,'' \emph{European Journal of
  Control}, p. 101324, 2025.

\bibitem{strombom2014solving}
D.~Str{\"o}mbom, R.~P. Mann, A.~M. Wilson, S.~Hailes, A.~J. Morton, D.~J.
  Sumpter, and A.~J. King, ``Solving the shepherding problem: heuristics for
  herding autonomous, interacting agents,'' \emph{Journal of the royal society
  interface}, vol.~11, no. 100, p. 20140719, 2014.

\bibitem{napolitano2026optimal}
I.~Napolitano and M.~di~Bernardo, ``Optimal transport for time-varying
  multi-agent coverage control,'' \emph{arXiv preprint arXiv:2601.21753}, 2026.

\bibitem{maffettone2024mixed}
G.~C. Maffettone, L.~Liguori, E.~Palermo, M.~Di~Bernardo, and M.~Porfiri,
  ``Mixed reality environment and high-dimensional continuification control for
  swarm robotics,'' \emph{IEEE Transactions on Control Systems Technology},
  2024.

\bibitem{catello2025sparse}
L.~Catello, I.~Napolitano, D.~Salzano, and M.~di~Bernardo, ``Sparse shepherding
  control of large-scale multi-agent systems via reinforcement learning,''
  \emph{arXiv preprint arXiv:2511.21304}, 2025.

\bibitem{fornasier2014mean}
M.~Fornasier and F.~Solombrino, ``Mean-field optimal control,'' \emph{ESAIM:
  Control, Optimisation and Calculus of Variations}, vol.~20, no.~4, pp.
  1123--1152, 2014.

\bibitem{fornasier2014mean2}
M.~Fornasier, B.~Piccoli, N.~P. Duteil, and F.~Rossi, ``Mean-field optimal
  control by leaders,'' \emph{Proceedings of the 53rd IEEE Conference on
  Decision and Control}, pp. 6957--6962, 2014.

\bibitem{bongini2017optimal}
M.~Bongini and G.~Buttazzo, ``Optimal control problems in transport dynamics,''
  \emph{Mathematical Models and Methods in Applied Sciences}, vol.~27, no.~03,
  pp. 427--451, 2017.

\bibitem{albi2022mean}
G.~Albi, S.~Almi, M.~Morandotti, and F.~Solombrino, ``Mean-field selective
  optimal control via transient leadership,'' \emph{Applied Mathematics \&
  Optimization}, vol.~85, no.~2, p.~22, 2022.

\bibitem{albi2024kinetic}
G.~Albi and F.~Ferrarese, ``Kinetic description of swarming dynamics with
  topological interaction and transient leaders,'' \emph{Multiscale Modeling \&
  Simulation}, vol.~22, no.~3, pp. 1169--1195, 2024.

\bibitem{lama2025nonreciprocal}
A.~Lama, M.~di~Bernardo, and S.~H. Klapp, ``Nonreciprocal field theory for
  decision-making in multi-agent control systems,'' \emph{Nature
  Communications}, vol.~16, no.~1, p. 8450, 2025.

\bibitem{bernardi2021leadership}
S.~Bernardi, R.~Eftimie, and K.~J. Painter, ``Leadership {{Through Influence}}:
  {{What Mechanisms Allow Leaders}} to {{Steer}} a {{Swarm}}?'' \emph{Bulletin
  of Mathematical Biology}, vol.~83, no.~6, p.~69, Jun. 2021.

\bibitem{bernardi2019macroscopic}
\BIBentryALTinterwordspacing
S.~Bernardi, G.~Estrada-Rodriguez, H.~Gimperlein, and K.~J. Painter,
  ``Macroscopic descriptions of follower-leader systems,'' \emph{Kinetic and
  Related Models}, vol.~14, no.~6, pp. 981--1002, 2021. [Online]. Available:
  \url{https://www.aimsciences.org/article/id/d8004c4d-a77c-48ce-b44c-66d8d44c2c53}
\BIBentrySTDinterwordspacing

\bibitem{bernoff2011primer}
A.~J. Bernoff and C.~M. Topaz, ``A primer of swarm equilibria,'' \emph{SIAM
  Journal on Applied Dynamical Systems}, vol.~10, no.~1, pp. 212--250, 2011.

\bibitem{bodnar2005derivation}
M.~Bodnar and J.~J.~L. Velazquez, ``Derivation of macroscopic equations for
  individual cell-based models: a formal approach,'' \emph{Mathematical methods
  in the applied sciences}, vol.~28, no.~15, pp. 1757--1779, 2005.

\bibitem{boldini2024stigmergy}
A.~Boldini, M.~Civitella, and M.~Porfiri, ``Stigmergy: from mathematical
  modelling to control,'' \emph{Royal Society Open Science}, vol.~11, no.~9, p.
  240845, 2024.

\bibitem{royden1988real}
H.~L. Royden and P.~Fitzpatrick, \emph{Real analysis}.\hskip 1em plus 0.5em
  minus 0.4em\relax Macmillan New York, 1988, vol.~32.

\bibitem{kullback1951information}
S.~Kullback and R.~A. Leibler, ``On information and sufficiency,'' \emph{The
  annals of mathematical statistics}, vol.~22, no.~1, pp. 79--86, 1951.

\bibitem{khalil2002nonlinear}
H.~K. Khalil, \emph{Nonlinear systems}.\hskip 1em plus 0.5em minus 0.4em\relax
  Prentice Hall, 2002.

\bibitem{dilorenzo2025decentralized}
B.~Di~Lorenzo, G.~C. Maffettone, and M.~di~Bernardo, ``Decentralized
  continuification control of multi-agent systems via distributed density
  estimation,'' \emph{IEEE Control Systems Letters}, vol.~9, pp. 1580--1585,
  2025.

\bibitem{maffettone2022continuification}
G.~C. Maffettone, A.~Boldini, M.~Di~Bernardo, and M.~Porfiri,
  ``Continuification control of large-scale multiagent systems in a ring,''
  \emph{IEEE Control Systems Letters}, vol.~7, pp. 841--846, 2022.

\bibitem{nikitin2021continuation}
D.~Nikitin, C.~Canudas-de Wit, and P.~Frasca, ``A continuation method for
  large-scale modeling and control: from odes to pde, a round trip,''
  \emph{IEEE Transactions on Automatic Control}, vol.~67, no.~10, pp.
  5118--5133, 2021.

\bibitem{fung2025mean}
S.~W. Fung and L.~Nurbekyan, ``Mean-field control barrier functions: A
  framework for real-time swarm control,'' \emph{Proceedings of the 2025
  American Control Conference (ACC)}, 2025.

\bibitem{gao2026banach}
X.~Gao, G.~Pascual, S.~Brown, and S.~Mart{\'\i}nez, ``Banach control barrier
  functions for large-scale swarm control,'' \emph{arXiv preprint
  arXiv:2602.05011}, 2026.

\end{thebibliography}
% Generated by IEEEtran.bst, version: 1.14 (2015/08/26)

\end{document}